\newtheorem{thm}{Theorem}[section]
\newtheorem{cor}{Corollary}[thm]
\newtheorem{lemma}{Lemma}[section]
\newtheorem{remark}{Remark}[section]
\def \be{\begin{equs}}
\def \ee{\end{equs}}
\def \E{\mathbb{E}}
\def \ym{{\bm Y}^{\mathrm{mis}}}
\def \yo{{Y}^{\mathrm{obs}}}
\def \yov{{\bm Y}^{\mathrm{obs}}}
\def \yob{\bar{{Y}}^{\mathrm{obs}}}
\def \tfs{\bar{\tau}_{.j}} % Finite population estimand with vector g*
\def \tss{\bar{\tau}_{.j}^{\mathrm{SP}}} % Super population estimand with vector g*
\def \tfss{\bar{\tau}_{.j^{\prime}}} % Finite population estimand with vector g**
\def \hattfs{\widehat{\bar{\tau}}_{.j}} % Finite population estimator with g*
\def \hattfss{\widehat{\bar{\tau}}_{.j^{\prime}}} % Finite population estimator with g**
\def \Zplus{\mathbb{Z}^+} %
\def \Zminus{\mathbb{Z}^-} %
\def \var{\mathrm{Var}}
\def \mm {\mathbf{m}}
\def \cov{\mathrm{Cov}}
\newcommand{\bm}[1]{\mbox{\boldmath$ #1 $\unboldmath}}
\title{Causal inference from $2^k$ factorial designs using the potential outcomes model}
\author{Tirthankar Dasgupta, Natesh S. Pillai \& Donald B. Rubin \\
{\textit{Department of Statistics}}, \\ {\textit{Harvard University, Cambridge, MA 02138}} }
\begin{document}
%\ifpdf
%	\DeclareGraphicsExtensions{.pdf, .jpg, .tif}
%	\else
%	\DeclareGraphicsExtensions{.eps, .jpg}
%	\fi
	
\maketitle

%\keywords{Dictionary learning; Gaussian process; heteroscedastic; multivariate regression; Wishart process.}

\begin{abstract}
A framework for causal inference from two-level factorial designs is proposed. The framework utilizes the concept of potential outcomes that lies at the center stage of causal inference and extends Neyman's repeated sampling approach for estimation of causal effects and randomization tests based on Fisher's sharp null hypothesis to the case of 2-level factorial experiments. The framework allows for statistical inference from a finite population, permits definition and estimation of estimands other than ``average factorial effects'' and leads to more flexible inference procedures than those based on ordinary least squares estimation from a linear model.
\end{abstract}

{\small \textbf{\emph{Keywords: } Factorial Designs, Experimental Design, Potential Outcomes, Causal Inference.
}}

\section{Introduction}
Factorial designs were originally developed in the context of agricultural experiments (Yates 1937, Fisher 1942). These designs allow the relative effects of several factors and their interactions to be assessed efficiently. A $2^K$ factorial design involves $2^K$ treatment combinations arising from $K$ factors, each at two levels. The estimands, or objects of interest, in $2^K$ factorial designs are typically the $2^K-1$ \emph{factorial effects}, which are essentially causal effects of the experimental factors on the response of interest (say $Y$), and include $K$ main effects and ${K \choose j}$ $j$-factor interactions for $j = 2, \ldots, K$.

It is surprising that the estimands have typically received less attention in factorial design literature compared to their estimators. In several textbooks on experimental design (e.g. Wu and Hamada 2009, Ch. 4), estimated factorial effects are first defined as orthogonal contrasts of the observed values of the response. The estimands are then defined as the expectation of these estimators over a hypothetical super population of interest, from which the experimental units are assumed to be randomly sampled. More precisely, the objects of inference are described in terms of the parameters of a regression model of the observed response on the design matrix with an additive error. The inference for factorial effects is, therefore, typically based on a linear model or a generalized linear model (GLM) framework depending on whether the observed response is normally or non-normally distributed.

The linear model based framework for statistical analysis of factorial experiments, although used in most factorial design applications (see, for example, Box Hunter and Hunter 2005, Wu and Hamada 2009), suffers from the following inherent drawbacks:
\begin{enumerate}
\item It defines the causal estimands as a parameter of the probability distribution of the observed response, whereas an estimand should ideally be based on the scientific goals of the experiment.
\item It does not relate the estimands to the population of interest, and more specifically, to the experimental units that constitute the population. For example, it fails to distinguish between situations where the population of interest is a hypothetical super population and infinite (e.g., generated by a manufacturing process) and finite (e.g., a study that pertains to a well-defined set of schools in New York). Whereas the model may make sense under sampling of experimental units from a super population, it is more difficult to interpret the parameters as objects of interest for a finite population.
\item Because the linear model does not include a treatment assignment variable, it needs to be re-defined under every distinct randomization scheme (e.g., typically random effects are introduced for split-plot models (Wu and Hamada 2009)).
\item It does not reveal how the estimation will be affected if additivity does not hold, i.e., if the effects of treatment combinations vary across experimental units.
\end{enumerate}

A natural question that arises at this point is: why have recent developments in the field of factorial experiments ignored the limitations stated above? A plausible explanation emerges from the fact that most of the recent theoretical developments have been triggered by industrial applications, and the linear-model-based framework works well in most typical industrial experiments. First, the experimental units are often exchangeable due to controlled experimental conditions, making the assumption of constant treatment effects a realistic one. Second, the population of interest is typically infinite and hypothetical, because the experimenter's interest lies in the future selection of optimal conditions identified through experimentation. Third, the average treatment effect over the hypothetical population of interest is useful and easily interpretable in most industrial applications.

Unfortunately, due to the above limitations, factorial designs, although being widely used for industrial experimentation, have found few applications in social, behavioral and biomedical experiments, where most of the aforementioned complexities are predominant. Modern-day technology is also making industrial experimentation increasingly complex. Several industrial experiments now have complex randomization structures. Large inherent variation among experimental units, e.g., substrates in nanotechnology, are reportedly leading to poor reproducibility of experimental results (Dasgupta et al. 2008). Considering the potential to apply factorial designs in social, medical and behavioral experiments (Chakraborty et al. 2009, Collins et al. 2009) and their ever-increasing need in engineering and industrial experimentation, a unified framework addressing the limitations of the linear-model based framework for inference appears highly desirable.

In this article, we propose a framework for causal inference from $2^K$ factorial design to address these deficiencies. The framework utilizes the concept of potential outcomes that lies at the center stage of causal inference (Rubin 2008). Although such a framework for single-factor experiments with two levels is well-developed and popularly known as the Rubin Causal Model (RCM, Holland 1986), it is not yet developed for multiple-factor experiments. We consider $2^K$ designs because they can be studied as natural extensions of treatment-control studies with a single factor.

The article is organized as follows. In Section 2 we briefly review the RCM for randomized experiments and observational studies involving a single factor at two levels and discuss it from a historical perspective. In Section 3 we develop the basic framework for extending the RCM to $2^K$ designs by defining factorial effects in terms of the potential outcomes, and propose a response model that is random only through the assignment mechanism. In Sections 4 and 5 respectively, we extend Neymanian (1923) and Fisherian (1925) randomization-based inference to $2^K$ designs. In Section 6 we develop a more flexible Bayesian approach for causal inference from $2^K$ designs. In Section 7, we demonstrate the advantages of the approach with a practical example. Some concluding remarks are given in Section 8.

\section{A brief review of potential outcomes, RCM and its evolution}

The first formal notation for potential outcomes was introduced by Neyman (1923) for randomization-based inference in randomized experiments, and subsequently used by Kempthorne (1952) and Cox (1958) for causal inference from randomized experiments. The concept was formalized and extended by Rubin (1974, 1975, 1977, 1978) for other forms of causal inference from randomized experiments \emph{and} observational studies, and exposition of this transition appears in Rubin (2010).

The early evolution of the RCM was motivated by the need for a clear separation between the object of inference -- often referred to as \emph{the science} -- and what researchers do to learn about the science (e.g., randomly assign treatments to units). In the context of causal inference, the science is a matrix where the rows represent $N$ units, which are physical objects at a particular point of time, and the columns represent potential outcomes under each possible exposure. Thus, for a study with one factor at two levels represented by 1 and 0, each row of the science can be written as $[Y_i(1), Y_i(0)]$, where $Y_i(x)$ is the potential outcome of unit $i$ if unit $i$ receives treatment $x$, $x \in \{0,1\}$, indicated by $W_i(x) = 1$. The representation of the science is adequate under Stable unit treatment value assumption (SUTVA) as defined by Rubin (1980).

 The causal effect of Treatment 1 versus Treatment 0 for the $i$th unit is the comparison of the corresponding potential outcomes for that unit: $Y_i(1)$ versus $Y_i(0)$ (e.g., their difference or their ratio). The ``fundamental problem facing inference for causal effects'' (Rubin 1978) is that only one of the potential outcomes can ever be observed for each unit. Therefore, unit-level causal effects cannot be known and must be inferred. The RCM permits prediction of unit-level causal effects from either the Neymanian (1923/1990) perspective or from the Bayesian perspective (Rubin 1978), although such estimations are generally imprecise relative to the estimation of population or subpopulation causal effects.

Although the average causal effects $\bar{Y}(1) - \bar{Y}(0)$ are common estimands in many fields of application, other summaries of unit-level causal effects may also be of interest in specific scientific studies. As has been emphasized by Rubin (1974-2010), there is no reason to focus solely on average causal effects, although this quantity is especially easy to estimate unbiasedly with standard statistical tools in randomized experiments under simple assumptions.

Rubin (2010) describes the RCM in terms of three legs -- the first being to define causal effects by using potential outcomes (define the science), the second being to describe the process by which some potential outcomes will be revealed (the assignment mechanism), and the third being the Bayesian posterior predictive distribution of the missing potential outcomes.

\vspace{0.5 in}

\section{Extending the RCM to $2^K$ factorial designs} \label{sec:RCM2^k}

\subsection{Defining factorial effects in the potential outcomes framework for a finite population} \label{subsec:factorialeffectsfinite}

Let the $K$ factors be indexed by $k$, and let ${\bm z}$ denote a particular treatment combination represented by a $K$-dimensional vector of -1's and 1's, where the $k$th element indicates whether the $k$th factor is at its low level (-1) or at its high level (+1), $k = 1, \ldots, K$. The number of possible values of ${\bm z}$ is $J = 2^K$; and we denote the set of all $J$ treatment combinations by $\mathbb{Z}$.  Let $Y_i({\bm z})$ denote the potential outcome of the $i$th unit if exposed to treatment ${\bm z}$. Note that when introducing this notation, we implicitly make the stable unit treatment value assumption (SUTVA), which means that the potential outcome of a particular unit depends only on the treatment combination it is assigned, and NOT on the assignments of the remaining units, and that there are no hidden versions of treatments not represented by the $J$ values of ${\bm z}$. All potential outcomes for unit $i$ comprise the vector ${\bm Y}_i$ of dimension $J$. Finally, we define \emph{the science} as the $N \times 2^K$ matrix ${\bm Y}$ of potential outcomes in which the $i$th row is the $J$-vector ${\bm Y}_i$, $i = 1, \ldots, N$, and assume that $N = r 2^K$, where $r$ is an integer representing the number of replications of each treatment combination.

We are interested in contrasting, for each unit, one half of the potential outcomes with the other half of the potential outcomes. For example, the difference of the averages of the potential outcomes when factor 1 is at its high level and at its low level, the so-called ``main effect of factor 1''. Of course, other definitions could be the difference in medians, or the difference in the logarithm of the averages, but the tradition in the study of factorial experiments is to deal with the difference of averages, and this is the situation we focus on here. A factorial effect for each unit is the difference in the averages between one half of the potential outcomes and the other half. Therefore a factorial effect for unit $i$ can be defined by a vector of dimension $J$, say ${\bm g}$, with one half of its elements equal to -1, and one half equal to +1. Dividing the inner product of vector ${\bm g}$ with vector ${\bm Y}_i$ by $2^{K-1}$, we obtain the ${\bm g}$-factorial effect for unit $i$. There are $J-1$ such vectors ${\bm g}$, which we index by $j = 1, \ldots, J-1$ and order as follows. The first, ${\bm g}_1$, generates the main effect of factor 1; ${\bm g}_2$ generates the main effect of factor 2; and proceeding this way, ${\bm g}_K$ generates the main effect of factor $K$. The definition of the main effect of factor $j$ for unit $i$, in terms of potential outcomes is therefore $\tau_{ij} = 2^{-(K-1)}{\bm g}_j^{\prime} {\bm Y}_i, \ i = 1, \ldots, N; \ j = 1, \ldots, K.$

After the main effects, the next ${K \choose 2}$ vectors ${\bm g}_j$ generate the ${K \choose 2}$ two-factor interactions between factors for unit $i$. The first generates the interaction between factor 1 and factor 2, the second generates the interaction between factor 1 and factor 3, etc., in the analogous order, moving on to the two-way interaction between factor $K-1$ and factor $K$. The definition of the two-way interaction between factor $k$ and factor $k^{\prime}$ for unit $i$ is therefore
 $\tau_{ij} = 2^{-(K-1)}{\bm g}_j^{\prime} {\bm Y}_i, \ i=1, \ldots, N; \ j = K+1. \ldots, K+{K \choose 2},$ where each element of ${\bm g}_j$ is obtained by multiplying the corresponding elements of the ${\bm g}$-vectors representing the main effects of factors $k$ and $k^{\prime}$.

The next ${K \choose 3}$ values of ${\bm g}_j$ generate the ${K \choose 3}$ three-way interactions for unit $i$, first between factor 1, factor 2, and factor 3; followed by between factor 1, factor 2, and factor 4; etc. Each ${\bm g}$-vector representing an interaction between three factors is generated by multiplying the three ${\bm g}$-vectors representing the main effects of these factors element-wise. Proceeding in a similar manner, we finally end with the $K$-way interaction between all of the of factors, generated by a vector labeled as ${\bm g}_{J-1}$. For completeness and later use, we label the vector generating the $i$th unit's mean potential outcome as ${\bm g}_0 = (1, \ldots, 1)$, so that we have $J$ values of ${\bm g}_j$, $j = 0, \ldots, J-1$.

A general factorial effect for unit $i$ defined by vector ${\bm g}_j$ will be denoted by $\tau_{ij}$, which is simply a linear combination of unit $i$'s potential outcomes and can be expressed as
\begin{equation}
\tau_{ij} = 2^{-(K-1)}{\bm g}_j^{\prime} {\bm Y}_i, \ i=1, \ldots, N, \;\ j=1, \ldots, {J-1}. \label{eq:defunitfactorial}
\end{equation}

%Note that corresponding to each vector ${\bm g}_*$, their exists a unique disjoint partition $\{ \mathbb{Z^+}({\bm g^*}),\mathbb{Z^-}({\bm g^*})\}$ of the set %$\mathbb{Z}$, so that the factorial effect $\theta_i ({\bm g}_*)$ can be expressed in the following alternative form:
%\begin{equation}
%\theta_i ({\bm g}_*) = 2^{-(k-1)} [ \sum_{z \in \Zplus}Y_i({\bm z}) - \sum_{z \in \Zminus}Y_i({\bm z})]. \label{eq:altdefunitfactorial}
%\end{equation}

The average of these $N$ unit-level factorial effects across the $N$ units for each possible ${\bm g}_j$ define the $J-1$ factorial effect estimands in the finite population:

\begin{equation}
\tfs = \frac{1}{N} \sum_{i=1}^N \tau_{ij} = 2^{-(K-1)}{\bm g}_j^{\prime} \bar{\bm Y}, \label{eq:altdefpopfactorial}
\end{equation}

where
\begin{equation}
\bar{\bm Y} =  \frac{1}{N} \sum_{i=1}^N {\bm Y}_i. \label{eq:ybar(x)}
\end{equation}

The factorial effects $\tfs$ defined by (\ref{eq:altdefpopfactorial}) are the \emph{finite-population estimands} in the statistical inference problem. Note that apart from defining the average factorial effect $\tfs$, the potential outcomes model and the definition of unit-level factorial effects permit defining the median of unit-level factorial effects or the quantiles of the distribution of the unit-level factorial effects as population-level factorial effects. Such population-level effects may be more meaningful estimands in certain scientific queries compared to the conventional average factorial effects.

%Note that the factorial effect $\tfs$ can also be expressed in the following form which is analogous to (\ref{eq:altdefunitfactorial}):
%\begin{equation}
%\theta_i ({\bm g}_*) = 2^{-(k-1)} [ \sum_{z \in \Zplus}\bar{Y}({\bm z}) - \sum_{z \in \Zminus}\bar{Y}({\bm z})], \label{eq:defpopfactorial}
%\end{equation}
%where $\bar{Y}({\bm z}) = N^{-1} \sum_{i=1}^N Y_i({\bm z})$.

\textbf{Example}: Consider a $2^2$ factorial experiment, where the potential outcomes ${\bm Y}_i$ for the $i$th unit are  $Y_i(-1,-1), Y_i(-1,1), Y_i(1,-1), Y_i(1,1)$. The main effects of factors 1 and 2, respectively, are represented by vectors ${\bm g}_1 = (-1,-1,1,1)^{\prime}$, ${\bm g}_2 = (-1,1,-1,1)^{\prime}$, and the two-way interaction between factors 1 and 2 is represented by the vector ${\bm g}_3 =(1,-1,1,-1)^{\prime}$. The three unit-level factorial effects $\tau_{i1}, \tau_{i2}$ and $\tau_{i3}$ are given by $2^{-1} {\bm g}_1^{\prime} {\bm Y}_i$, $2^{-1}{\bm g}_2^{\prime} {\bm Y}_i$ and $2^{-1} {\bm g}_3^{\prime} {\bm Y}_i$, and the corresponding population-level factorial effects $\bar{\tau}_{.1}, \bar{\tau}_{.2}$ and $\bar{\tau}_{.3}$ are $2^{-1} {\bm g}_1^{\prime} \bar{\bm Y}$, $2^{-1}{\bm g}_2^{\prime} \bar{\bm Y}$ and $2^{-1} {\bm g}_3^{\prime} \bar{\bm Y}$, where $\bar{\bm Y}$ is given by (\ref{eq:ybar(x)}).

The definitions of unit-level and finite population-level factorial effects can easily be extended to define factorial effects for a hypothetical \emph{super} population (Imbens and Rubin 2011, Ch. 6) by letting the population size $N$ grow infinitely large, keeping the number of factors $K$ fixed. Super population-level factorial effects are defined as
\begin{equation}
\tss = \lim_{r \rightarrow \infty} \tfs  = \lim_{N \rightarrow \infty} \frac{1}{N} \sum_{i=1}^N \tau_{ij}, \label{eq:limitfacteffect}
\end{equation} \label{def:superpop1}
where $\tau_{ij}$ and $\tfs$ are defined by (\ref{eq:defunitfactorial}) and (\ref{eq:altdefpopfactorial}) respectively.

\subsection{Assignment mechanism, observed outcomes and unbiased estimation of treatment means}

As observed repeatedly by Rubin (2005, 2008), statistical inference for causal effects requires the specification of an assignment mechanism; a probabilistic model for how experimental units are allocated to the treatment combinations. For $2^K$ factorial designs, we define the following assignment variables:
  \[ W_i({\bm z}) = \left \{ \begin{array}{cl}
  1 & \mbox{if the} \;\ i\mbox{th} \;\ \mbox{unit is assigned to} \;\ {\bm z} \\
  0 &  \mbox{otherwise}.
  \end{array}  \right. \label{eq:assignmentvar} \]
\noindent Clearly, there are $2^k N$ such assignment variables satisfying the following conditions:
\begin{eqnarray}
\sum_{{\bm z}} W_i({\bm z}) &=& 1, \;\ i=1, \ldots, N, \label{eq:wrest1} \\
\sum_{i=1}^N W_i({\bm z}) &=& r, \;\ \mbox{for all} \;\ {\bm z}. \label{eq:wrest2}
\end{eqnarray}
Condition (\ref{eq:wrest1}) follows from the fact that every experimental unit can be allocated to one and only one treatment combination, whereas (\ref{eq:wrest2}) specifies the number of replications ($r$) for each treatment combination. Depending on the treatment assignment mechanism, $W_i({\bm z})$ can have different probability distributions. We assume a completely randomized treatment assignment so that $\E(W_i({\bm x})|{\bm Y}) = 1/2^K$ for all ${\bm z}$, where $\E(\cdot|{\bm Y})$ is the expectation over all possible randomizations.

Denote the observed outcome corresponding to the $i$th experimental unit by $\yo_i , i=1, \ldots, N$, so that
\begin{equation}
\yo_i = \sum_{\bm z} W_i({\bm z})Y_i({\bm z}). \label{eq: observed outcome}
\end{equation}
\noindent
A natural estimator of $\bar{Y}({\bm z})$, the population mean for treatment combination ${\bm z}$, is
\begin{equation}
\yob({\bm z}) = \frac{1}{r} \sum_{i: W_i({\bm z})=1} \yo_i = \frac{1}{r} \sum_{i=1}^N W_i({\bm z}) Y_i({\bm z}), \label{eq: ybar}
\end{equation}
\noindent which is an unbiased estimator for $\bar{Y}({\bm z})$ because $\E(W_i({\bm z})|{\bm Y}) = 1/2^k = r/N$. The following three lemmas are used to derive important sampling properties of $\yob({\bm z})$, and consequently, those of the estimated factorial factorial effects, defined in Section \ref{sec:neyman}. The first two lemmas concern the assignment variable $W_i({\bm z})$. Define
\begin{equation}
D_i({\bm z}) = W_i({\bm z}) - \frac{r}{N}, \label{def D}
\end{equation}
where $\E(D_i({\bm z})|{\bm Y}) = 0$. Then,
\begin{lemma}
For $i, i^{\prime} = 1, \ldots, N$ and any treatment combination ${\bm z}$,
 \[ \E \big( D_i({\bm z}),D_{i^{\prime}}({\bm z}) | {\bm Y} \big ) = \left \{ \begin{array}{cc}
  \frac{r(N-r)}{N^2} & \mbox{if} \;\ i=i^{\prime}  \\ \\
  \frac{-r(N-r)}{N^2(N-1)} & \mbox{if} \;\ i \ne i^{\prime}
  \end{array}  \right. \] \label{lem: lemma 1}
\end{lemma}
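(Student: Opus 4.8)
The plan is to read each displayed quantity as a genuine (co)variance: since $\E(D_i(\bm z)\mid \bm Y)=0$, the diagonal entry is simply $\var(W_i(\bm z)\mid\bm Y)$ and the off-diagonal entry is $\cov(W_i(\bm z),W_{i'}(\bm z)\mid\bm Y)$, both computed under the completely randomized allocation that places exactly $r$ of the $N$ units at $\bm z$, with $\E(W_i(\bm z)\mid\bm Y)=r/N$.

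For the diagonal case $i=i'$, I would exploit that $W_i(\bm z)$ is a $\{0,1\}$-valued indicator, so $W_i(\bm z)^2=W_i(\bm z)$ and hence $\E(W_i(\bm z)^2\mid\bm Y)=r/N$. Subtracting the square of the mean gives $r/N-r^2/N^2=r(N-r)/N^2$, which is the first line. For the off-diagonal case $i\neq i'$, the only nontrivial ingredient is the joint probability that both units land at $\bm z$, namely $\E(W_i(\bm z)W_{i'}(\bm z)\mid\bm Y)=P(W_i(\bm z)=1,\,W_{i'}(\bm z)=1\mid\bm Y)$. Because the $r$ units assigned to $\bm z$ are drawn without replacement from the $N$ units, this equals $r(r-1)/\{N(N-1)\}$; subtracting $(r/N)^2$ and putting everything over the common denominator $N^2(N-1)$ simplifies the numerator to $r\{(r-1)N-r(N-1)\}=-r(N-r)$, yielding $-r(N-r)/\{N^2(N-1)\}$.

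I expect the off-diagonal computation to be the only real obstacle, and even it is routine; the arithmetic simplification is the single place where a sign or factor is easy to mishandle, so I would carry the common-denominator step explicitly. A cleaner alternative that sidesteps the combinatorial counting altogether is the symmetry-plus-constraint argument: condition (\ref{eq:wrest2}) gives $\sum_{i=1}^N W_i(\bm z)=r$, hence $\sum_{i=1}^N D_i(\bm z)=0$ for every $\bm z$. Taking the (conditional) variance of this identically zero sum gives $0=\var\big(\sum_i D_i(\bm z)\mid\bm Y\big)=N\,\var(D_i(\bm z)\mid\bm Y)+N(N-1)\,\cov(D_i(\bm z),D_{i'}(\bm z)\mid\bm Y)$, where exchangeability of the units under complete randomization justifies that all diagonal terms are equal and all off-diagonal terms are equal. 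Substituting the already-established value $r(N-r)/N^2$ for the common variance and solving for the common covariance reproduces $-r(N-r)/\{N^2(N-1)\}$ directly, without ever counting pairs.
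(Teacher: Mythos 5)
Your proof is correct. Note first that the paper itself states Lemma \ref{lem: lemma 1} without proof (it is a classical finite-population sampling fact, going back to Neyman), so there is no in-paper argument to compare against; your write-up supplies exactly the missing details. Both of your routes are sound: the direct computation correctly uses $\E(W_i({\bm z})^2\mid{\bm Y})=\E(W_i({\bm z})\mid{\bm Y})=r/N$ for the diagonal term and the without-replacement joint probability $\E\bigl(W_i({\bm z})W_{i'}({\bm z})\mid{\bm Y}\bigr)=r(r-1)/\{N(N-1)\}$ for the off-diagonal term, and your common-denominator simplification $r\{(r-1)N-r(N-1)\}=-r(N-r)$ is right. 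The alternative symmetry argument is also valid and arguably cleaner: since condition (\ref{eq:wrest2}) forces $\sum_{i=1}^N D_i({\bm z})=0$ identically, expanding
\begin{equation*}
0=\var\Bigl(\sum_{i=1}^N D_i({\bm z})\Bigm|{\bm Y}\Bigr)=N\,\var\bigl(D_i({\bm z})\mid{\bm Y}\bigr)+N(N-1)\,\cov\bigl(D_i({\bm z}),D_{i'}({\bm z})\mid{\bm Y}\bigr)
\end{equation*}
and invoking exchangeability of units under complete randomization gives the off-diagonal value from the diagonal one without any pair counting; this style of argument also extends naturally to Lemma \ref{lem: lemma 2}, where the constraint (\ref{eq:wrest1}) across treatment combinations plays the analogous role. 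The one implicit assumption you should state explicitly is exchangeability itself (all diagonal terms equal, all off-diagonal terms equal), which holds here because the completely randomized assignment distribution is invariant under permutations of the unit labels.
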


\begin{lemma}
For $i ,i^{\prime} = 1, \ldots, N$ and any two distinct treatment combinations ${\bm z}$ and ${\bm z}^*$, i.e., ${\bm z} \ne {\bm z}^*$
 \[ \E \big( D_i({\bm z}),D_{i^{\prime}}({\bm z}^*) | {\bm Y} \big ) = \left \{ \begin{array}{cc}
  \frac{-r^2}{N^2} & \mbox{if} \;\ i=i^{\prime}  \\ \\
  \frac{r^2}{N^2(N-1)} & \mbox{if} \;\ i \ne i^{\prime}
  \end{array}  \right. \] \label{lem: lemma 2}
\end{lemma}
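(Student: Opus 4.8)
The plan is to reduce the statement to the joint second moment of the assignment indicators and then exploit the deterministic constraints (\ref{eq:wrest1})--(\ref{eq:wrest2}). Since $\E(D_i({\bm z})|{\bm Y}) = 0$ and $D_i({\bm z}) = W_i({\bm z}) - r/N$, expanding the product gives
\[
\E\big(D_i({\bm z}) D_{i'}({\bm z}^*) \mid {\bm Y}\big) = \E\big(W_i({\bm z}) W_{i'}({\bm z}^*) \mid {\bm Y}\big) - \frac{r^2}{N^2},
\]
so the whole lemma comes down to evaluating $\E(W_i({\bm z})W_{i'}({\bm z}^*)|{\bm Y})$ in the two cases $i = i'$ and $i \ne i'$.

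For $i = i'$ the constraint (\ref{eq:wrest1}) does the work immediately: since the $W_i({\bm z})$ are $\{0,1\}$-valued and $\sum_{\bm z} W_i({\bm z}) = 1$, at most one of them is nonzero, so $W_i({\bm z})W_i({\bm z}^*) = 0$ whenever ${\bm z} \ne {\bm z}^*$. Hence $\E(W_i({\bm z})W_i({\bm z}^*)|{\bm Y}) = 0$ and the first line of the lemma, $-r^2/N^2$, drops out.

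For $i \ne i'$ I see two routes. The direct route counts assignments: complete randomization distributes the $N$ units among the $J = 2^K$ groups of size $r$ uniformly, so $\Pr(W_i({\bm z}) = 1, W_{i'}({\bm z}^*) = 1)$ is the number of assignments fixing unit $i$ in group ${\bm z}$ and unit $i'$ in group ${\bm z}^*$, namely $(N-2)!/[(r-1)!^2 (r!)^{J-2}]$, divided by the total $N!/(r!)^J$, which simplifies to $r^2/[N(N-1)]$; subtracting $r^2/N^2$ gives $r^2/[N^2(N-1)]$. The slicker route avoids the counting and leans on Lemma \ref{lem: lemma 1}: because $\sum_{\bm z} D_{i'}({\bm z}) = 1 - J(r/N) = 0$, multiplying by $D_i({\bm z})$ and taking conditional expectations yields $\E(D_i({\bm z})D_{i'}({\bm z})|{\bm Y}) + \sum_{{\bm z}^* \ne {\bm z}}\E(D_i({\bm z})D_{i'}({\bm z}^*)|{\bm Y}) = 0$. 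By the invariance of complete randomization under relabeling the treatment combinations, each of the $J-1$ off-diagonal terms is equal, so it equals $-(J-1)^{-1}\E(D_i({\bm z})D_{i'}({\bm z})|{\bm Y})$; plugging in the $i \ne i'$ value from Lemma \ref{lem: lemma 1} and using $J - 1 = (N-r)/r$ returns $r^2/[N^2(N-1)]$.

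The main obstacle is the $i \ne i'$ case, and specifically justifying the symmetry step in the slick route: one must argue that the joint law of the assignment vector is invariant under permutations of the treatment labels, so that $\E(D_i({\bm z})D_{i'}({\bm z}^*)|{\bm Y})$ depends on $({\bm z},{\bm z}^*)$ only through whether they coincide. If one prefers to sidestep this invariance argument, the combinatorial count is elementary but requires careful bookkeeping of the factorials.
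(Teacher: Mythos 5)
The paper never actually proves this lemma: Lemmas~\ref{lem: lemma 1}--\ref{lemma:varcovxalt} are stated without proof as standard facts of Neyman-style finite-population randomization theory (the Appendix only proves the theorems that build on them), so there is no in-paper argument to compare yours against; your proposal is judged on its own merits, and it is correct. The reduction to $\E(W_i({\bm z})W_{i'}({\bm z}^*)|{\bm Y}) - r^2/N^2$ is valid because $\E(W_i({\bm z})|{\bm Y}) = r/N$; the $i = i'$ case follows exactly as you say from the constraint (\ref{eq:wrest1}), which forces $W_i({\bm z})W_i({\bm z}^*) = 0$ for ${\bm z} \ne {\bm z}^*$; and your count for $i \ne i'$ of $(N-2)!\,/[(r-1)!^2\,(r!)^{J-2}]$ favorable assignments out of $N!/(r!)^J$ indeed simplifies to $r^2/[N(N-1)]$, yielding $r^2/[N^2(N-1)]$ after subtracting $r^2/N^2$. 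Your second route is also sound, and the symmetry step you flag as the main obstacle is not a genuine obstacle: complete randomization with all cell sizes equal to $r$ is, by definition, the uniform distribution over assignments satisfying (\ref{eq:wrest1})--(\ref{eq:wrest2}), and that distribution is invariant under any permutation of the $J$ treatment labels, so for fixed $i \ne i'$ the quantity $\E(D_i({\bm z})D_{i'}({\bm z}^*)|{\bm Y})$ can depend on $({\bm z},{\bm z}^*)$ only through whether they coincide. The one structural difference between your two routes is worth noting: the second proves Lemma~\ref{lem: lemma 2} only relative to Lemma~\ref{lem: lemma 1} (it consumes the off-diagonal $i \ne i'$ value of that lemma), whereas the counting argument is self-contained; since the paper records no proof of either lemma, the self-contained counting version is the one to keep, with the identity-based route serving as a consistency check.
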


\begin{lemma}
Let
\begin{equation}
S^2({\bm z}) = \sum_{i=1}^N (Y_i({\bm z}) - \bar{Y}({\bm z}))^2 / (N-1), \label{eq:varx}
\end{equation}
denote the variance of all potential outcomes (with divisor $N-1$) for treatment combination ${\bm z}$, and
\begin{equation}
S^2({\bm z}, {\bm z}^*) = \sum_{i=1}^N (Y_i({\bm z}) - \bar{Y}({\bm z}))(Y_i({\bm z}^*) - \bar{Y}({\bm z}^*)) / (N-1), \label{eq:covx}
\end{equation}
denote the covariance (with divisor $N-1$) among pairs of potential outcomes for two distinct treatment combinations ${\bm z}$ and ${\bm z}^*$. Then, $S^2({\bm z})$ and
$S^2({\bm z}, {\bm z}^*)$ can be expressed in the following alternative forms:
\begin{eqnarray*}
S^2({\bm z}) &=& \frac{1}{N} \Big[\sum_{i=1}^N Y_i^2({\bm z}) - \frac{1}{N-1} \sum_{i=1}^N \sum_{i^{\prime}=1, i^{\prime} \ne i}^N Y_i ({\bm z}) Y_i^{\prime}({\bm z})  \Big], \label{eq: varxalt} \\
S^2({\bm z},{\mathbf z^*}) &=& \frac{1}{N} \Big[ \sum_{i=1}^N  Y_i({\bm z}) Y_i({\bm z}^*) - \frac{1}{N-1} \sum_{i=1}^N \sum_{i^{\prime}=1, i^{\prime} \ne i}^N Y_i({\bm z}) Y_i^{\prime}({\bm z}^*) \Big] . \label{eq:covxalt}
\end{eqnarray*} \label{lemma:varcovxalt}
\end{lemma}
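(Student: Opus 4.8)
The plan is to prove both identities by direct algebraic expansion, since each is merely a restatement of the sample variance (respectively, sample covariance) of a finite list of numbers in a form that separates the ``diagonal'' squared terms from the ``off-diagonal'' cross terms. The two claims have identical structure, so I would carry out the variance identity in full and then note that the covariance identity follows by an essentially verbatim argument, with the second factor $Y_i({\bm z})$ replaced throughout by $Y_i({\bm z}^*)$.

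For $S^2({\bm z})$, the first step is to expand the square in (\ref{eq:varx}) to get $(N-1)S^2({\bm z}) = \sum_i Y_i^2({\bm z}) - 2\bar{Y}({\bm z})\sum_i Y_i({\bm z}) + N\bar{Y}^2({\bm z})$. Substituting the identity $\sum_i Y_i({\bm z}) = N\bar{Y}({\bm z})$ collapses the last two terms into the familiar form $(N-1)S^2({\bm z}) = \sum_i Y_i^2({\bm z}) - N\bar{Y}^2({\bm z})$. The one genuine manipulation is then to rewrite the subtracted term as $N\bar{Y}^2({\bm z}) = \tfrac{1}{N}\big(\sum_i Y_i({\bm z})\big)^2$ and to split the resulting product of sums into its diagonal and off-diagonal pieces, namely $\big(\sum_i Y_i({\bm z})\big)^2 = \sum_i Y_i^2({\bm z}) + \sum_{i\ne i'} Y_i({\bm z})Y_{i'}({\bm z})$. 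Collecting the $\sum_i Y_i^2({\bm z})$ terms produces a coefficient of $(N-1)/N$, and dividing through by $N-1$ yields exactly the claimed expression.

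The covariance identity uses the same three moves: expand the product in (\ref{eq:covx}), apply $\sum_i Y_i({\bm z}) = N\bar{Y}({\bm z})$ together with $\sum_i Y_i({\bm z}^*) = N\bar{Y}({\bm z}^*)$ to reduce the two cross terms to $-N\bar{Y}({\bm z})\bar{Y}({\bm z}^*)$, and then write $N\bar{Y}({\bm z})\bar{Y}({\bm z}^*) = \tfrac{1}{N}\big(\sum_i Y_i({\bm z})\big)\big(\sum_{i'} Y_{i'}({\bm z}^*)\big)$ before separating the $i=i'$ diagonal (which merges with $\sum_i Y_i({\bm z})Y_i({\bm z}^*)$) from the $i\ne i'$ remainder.

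There is no real obstacle here, since the result is an algebraic identity rather than an inequality or an asymptotic statement, so no estimates are required. The only point demanding care is bookkeeping: tracking the factor $1/(N-1)$ correctly through the final division, and properly identifying which part of the product-of-sums is the diagonal contribution (folding into $\sum_i Y_i^2({\bm z})$, or $\sum_i Y_i({\bm z})Y_i({\bm z}^*)$) versus the off-diagonal double sum. I would sanity-check the final coefficients on a trivial case, e.g.\ verifying that both sides vanish when all $Y_i({\bm z})$ are equal.
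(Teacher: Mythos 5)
Your proof is correct. The paper in fact states this lemma \emph{without} proof---it is treated as routine algebra feeding into the later sampling-variance results (Theorems \ref{thm:variance} and \ref{thm:covariance})---and your argument (expand the square, use $\sum_i Y_i({\bm z}) = N\bar{Y}({\bm z})$, then split $\big(\sum_i Y_i({\bm z})\big)\big(\sum_{i'} Y_{i'}({\bm z}^*)\big)$ into its diagonal and off-diagonal parts) is exactly the standard computation the authors implicitly rely on, carried out with the correct bookkeeping of the $1/(N-1)$ factor.
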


By application of Lemmas \ref{lem: lemma 1} - \ref{lemma:varcovxalt}, it follows that
\begin{eqnarray}
\var(\yob({\bm z})|{\bm Y}) &=& \frac{N-r}{rN} S^2({\bm z}), \;\ \mbox{for all} \;\ {\bm z} , \label{eq:varybar}\\
\cov(\yob({\bm z}), \yob({\bm z})^*|{\bm Y}) &=& -\frac{1}{N} S^2({\bm z}, {\bm z}^*) \;\ \mbox{for all} \;\ {\bm z}\ne{\bm z}^*, \label{eq:covybar}
\end{eqnarray}
where $S^2({\bm z})$ and $S^2({\bm z}, {\bm z}^*)$ are defined in Lemma \ref{lemma:varcovxalt}.

We conclude this section by observing that an unbiased estimator of any population-level average factorial effect $\tfs$  given by (\ref{eq:altdefpopfactorial}) can be obtained by replacing the treatment mean $\bar{Y}({\bm z})$ by $\yob({\bm z})$. We will now formally define this estimator and study its repeated sampling properties.

\section{Neymanian randomization-based inference for $2^K$ factorial designs}
\label{sec:neyman}

As observed by Rubin (2008), ``Neyman's form of randomization-based inference can be viewed as drawing inferences by evaluating the expectations of statistics over the distribution induced by the assignment mechanism in order to calculate a confidence interval for the typical causal effect.'' Here we study the sampling properties of the estimated factorial effects $\hattfs$ defined as
\begin{equation}
 \hattfs = 2^{-(K-1)} {\bm g}_j^{\prime} \bar{\bm Y}^{\mathrm{obs}}, \label{eq:defpopfactorialest}
\end{equation}
where $\bar{\bm Y}^{\mathrm{obs}}$ is the $J$-vector of average observed outcomes, and investigate the effect of additivity on the sampling properties under the Neymanian framework. Now, we state two theorems (proofs in the Appendix) that summarize the sampling properties of estimated average factorial effects.

\begin{thm}
For a completely randomized treatment assignment mechanism, the estimator $\hattfs$ given by (\ref{eq:defpopfactorialest}) is an unbiased estimator of the factorial effect $\tfs$, that is, averaging over the randomization distribution of the treatment assignment variables. \label{thm: mean}
\end{thm}

\begin{thm}
For a completely randomized treatment assignment mechanism, the sampling variance of $\hattfs$ is
\begin{eqnarray}
\frac{1}{2^{2(K-1)}r} \sum_{{\bm z} } S^2({\bm z}) - \frac{1}{N} S_j^2, \label{eq:varthetahat}
\end{eqnarray}
where $S^2({\bm z})$ is given by (\ref{eq:varx}), and
\begin{equation}
S_j^2 = \frac{1}{N-1} \sum_{i=1}^N ( \tau_{ij} - \tfs )^2, \label{eq:vartheta}
\end{equation}
denotes the population variance of all unit level factorial effects $\tau_{ij}$, $i=1, \ldots, N$.
\label{thm:variance}
\end{thm}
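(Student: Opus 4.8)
The plan is to express $\hattfs$ as a simple linear combination of the treatment-combination means $\yob(\bm z)$ and then expand its variance using the covariance structure already recorded in (\ref{eq:varybar}) and (\ref{eq:covybar}). Writing $g_j(\bm z)$ for the entry of $\bm{g}_j$ indexed by the treatment combination $\bm z$, equation (\ref{eq:defpopfactorialest}) reads $\hattfs = 2^{-(K-1)} \sum_{\bm z} g_j(\bm z) \yob(\bm z)$. First I would take the variance across the randomization distribution and expand it by bilinearity into a diagonal sum of variance terms plus a sum of pairwise covariance terms. Because every entry of $\bm{g}_j$ equals $\pm 1$, each squared coefficient $g_j(\bm z)^2$ collapses to $1$, so the diagonal contribution is simply $\sum_{\bm z} \var(\yob(\bm z)|\bm Y)$.

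Next I would substitute the two known formulas. Using (\ref{eq:varybar}) the diagonal terms become $\frac{N-r}{rN}\sum_{\bm z} S^2(\bm z)$, and using (\ref{eq:covybar}) the off-diagonal terms become $-\frac{1}{N}\sum_{\bm z \ne \bm z^*} g_j(\bm z) g_j(\bm z^*) S^2(\bm z, \bm z^*)$. At this stage the expression for $2^{2(K-1)}\var(\hattfs|\bm Y)$ still contains the pairwise covariances $S^2(\bm z, \bm z^*)$, which do not appear in the target formula; the essential step is therefore to re-express that weighted sum of covariances in terms of $S_j^2$.

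To do this I would expand $S_j^2$ directly from its definition (\ref{eq:vartheta}). Writing $\tau_{ij} - \tfs = 2^{-(K-1)}\sum_{\bm z} g_j(\bm z)(Y_i(\bm z) - \bar{Y}(\bm z))$ and squaring produces a double sum over $\bm z$ and $\bm z^*$; averaging over $i$ with divisor $N-1$ and recognizing the diagonal and off-diagonal inner sums as $S^2(\bm z)$ and $S^2(\bm z, \bm z^*)$ respectively (cf.\ (\ref{eq:varx}) and (\ref{eq:covx})) yields the identity $2^{2(K-1)} S_j^2 = \sum_{\bm z} S^2(\bm z) + \sum_{\bm z \ne \bm z^*} g_j(\bm z) g_j(\bm z^*) S^2(\bm z, \bm z^*)$. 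This is the crux of the argument, and I expect recognizing it to be the main obstacle, since it is precisely what lets one trade the unwanted covariance terms for the population variance $S_j^2$ of the unit-level effects.

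Finally I would substitute this identity to eliminate the covariance sum, so that $2^{2(K-1)}\var(\hattfs|\bm Y)$ becomes $\frac{N-r}{rN}\sum_{\bm z} S^2(\bm z) + \frac{1}{N}\sum_{\bm z} S^2(\bm z) - 2^{2(K-1)}\frac{1}{N} S_j^2$. Combining the two coefficients of $\sum_{\bm z} S^2(\bm z)$ gives $\frac{N-r}{rN} + \frac{1}{N} = \frac{1}{r}$, and dividing through by $2^{2(K-1)}$ recovers exactly (\ref{eq:varthetahat}). The only bookkeeping point to watch is that the same set of ordered pairs $(\bm z, \bm z^*)$ with $\bm z \ne \bm z^*$ appears both in the variance expansion and in the expansion of $S_j^2$, so the cancellation of covariance terms is exact.
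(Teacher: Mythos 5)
Your proposal is correct and follows essentially the same route as the paper's own proof: expand $\var(\hattfs\mid \bm Y)$ by bilinearity using (\ref{eq:varybar}) and (\ref{eq:covybar}), then expand $S_j^2$ from (\ref{eq:vartheta}) to obtain the identity $2^{2(K-1)}S_j^2 = \sum_{\bm z} S^2(\bm z) + \sum_{\bm z \ne \bm z^*} g_j(\bm z)g_j(\bm z^*)S^2(\bm z,\bm z^*)$, which eliminates the covariance terms, and finish with $g_j(\bm z)^2=1$ and $\tfrac{N-r}{rN}+\tfrac{1}{N}=\tfrac{1}{r}$. You correctly identified the trade of the covariance sum for $S_j^2$ as the key step, and your bookkeeping matches the paper's exactly.
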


Although Theorem \ref{thm: mean} is intuitive and straightforward, Theorem \ref{thm:variance} extends Neyman's (1923) results to $2^K$ factorial designs and provides us with a way to assess the effect of additivity on the sampling variance of an estimated factorial effect. Clearly, the term $S_j^2$ in the right hand side of (\ref{eq:varthetahat}) cannot be estimated from the experimental data because none of unit-level factorial effects $\tau_{ij}$ are observable. Let
\begin{equation}
\rho({\bm z},{\bm z}^*) = \frac{S^2({\bm z}, {\bm z^*})}{S({\bm z}) S({\bm z^*})}, \label{eq:rhodef}
\end{equation}
be the finite population correlation coefficient among all potential outcomes for treatment combinations ${\bm z}$ and ${\bm z}^*$, where $S^2({\bm z})$ and $S^2({\bm z}, {\bm z^*})$ are given by (\ref{eq:varx}) and (\ref{eq:covx}) respectively. For the $j$th factorial effect, define the subsets $\Zplus_j, \Zminus_j \subset \mathbb{Z}$ by
\be \label{eqn:Z}
\Zplus_j = \{z \in \mathbb{Z}: j\mbox{th element of}\, z  = +1 \}, \quad
\Zminus_j = \{z \in \mathbb{Z}: j\mbox{th element of}\, z  = -1 \} \;.
\ee
Thus  $\Zplus_j$ and $\Zminus_j$ partition the set ${\mathbb Z}$ into two disjoint sets of treatment combinations. It can be shown after some straightforward algebraic manipulations that
\begin{eqnarray}
S_j^2 &=& \frac{1}{2^{2(K-1)}} \Big[ \sum_{z \in \mathbb{Z}} S^2({\bm z}) + 2 \big[ \sum_{{\bm z} \in \Zplus_j} \sum_{{\bm z}^* \in \Zplus_j} \rho({\bm z},{\bm z}^*)S({\bm z})S({\bm z}^*) \nonumber \\
&+& \sum_{{\bm z} \in \Zminus_j} \sum_{{\bm z}^* \in \Zminus_j} \rho({\bm z},{\bm z}^*)S({\bm z})S({\bm z}^*) + \sum_{{\bm z} \in \Zplus_j} \sum_{{\bm z}^* \in \Zminus_j} \rho({\bm z},{\bm z}^*)S({\bm z})S({\bm z}^*) \big] \Big]. \label{eq:s^2theta}
\end{eqnarray}

Recall that two distinct treatment combinations ${\bm z}$ and ${\bm z}^*$ are said to have additive effects if the difference of the unit-level potential outcomes $Y_i({\bm z}) - Y_i({\bm z}^*)$ is the same for each $i = 1, \ldots, N$. Then a set of necessary and sufficient conditions for ``strict'' additivity (i.e., additivity of effects for all treatment combinations) are: (i) $S^2({\bm z}) = S^2$ and (ii) $\rho({\bm z},{\bm z}^*) = 1$ for all ${\bm z},{\bm z}^*$. It immediately follows from (\ref{eq:s^2theta}) that a necessary and sufficient condition for strict additivity is: $ S_j^2 = 0 $. Thus, substituting $S^2({\bm z}) = S^2$ and $ S_j^2 = 0 $ into (\ref{eq:varthetahat}), we arrive at the following corollary

\begin{cor}
Under strict additivity of treatment effects, for any factorial effect $\tfs$, the sampling variance of $\hattfs$ is
\begin{equation}
\var(\hattfs|{\bm Y}) = \frac{4}{N} S^2, \label{eq:varthetahat_additivity}
\end{equation}
where $S^2$ is the common population variance of potential outcomes for each treatment combination ${\bm z}$. \label{cor:varthetahat_additivity}
\end{cor}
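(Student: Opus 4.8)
The plan is to specialize the general variance formula of Theorem~\ref{thm:variance} to the strictly additive case, substituting the two equivalent characterizations of strict additivity recorded immediately above the corollary; the result then drops out by elementary bookkeeping, so there is no genuine analytic obstacle here.

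First I would recall from Theorem~\ref{thm:variance} that, for an arbitrary science $\bm Y$,
\begin{equation*}
\var(\hattfs|\bm Y) = \frac{1}{2^{2(K-1)}r} \sum_{\bm z} S^2(\bm z) - \frac{1}{N} S_j^2,
\end{equation*}
and then feed in the additivity conditions term by term. For the second term, condition (ii) asserts $S_j^2 = 0$, which is already noted to be equivalent to strict additivity. To keep matters self-contained I would argue this directly: additivity of every pairwise difference means $Y_i(\bm z) = Y_i(\bm{z}_0) + c(\bm z)$ with $c(\bm z)$ free of $i$, and since $|\Zplus_j| = |\Zminus_j| = 2^{K-1}$ the reference terms cancel in $\tau_{ij} = 2^{-(K-1)}\big[ \sum_{\bm z \in \Zplus_j} Y_i(\bm z) - \sum_{\bm z \in \Zminus_j} Y_i(\bm z) \big]$, so $\tau_{ij}$ does not depend on $i$. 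Hence $\tau_{ij} \equiv \tfs$ and $S_j^2 = 0$ directly from its definition (\ref{eq:vartheta}).

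For the first term, condition (i) gives $S^2(\bm z) = S^2$ for every $\bm z$, and since $\mathbb{Z}$ contains $2^K$ treatment combinations, $\sum_{\bm z} S^2(\bm z) = 2^K S^2$. Substituting both simplifications and using $N = r 2^K$ gives
\begin{equation*}
\var(\hattfs|\bm Y) = \frac{2^K}{2^{2(K-1)} r}\, S^2 = \frac{4}{2^K r}\, S^2 = \frac{4}{N}\, S^2,
\end{equation*}
which is exactly (\ref{eq:varthetahat_additivity}). The only place demanding care is the exponent arithmetic $2^K / 2^{2(K-1)} = 2^{2-K} = 4/2^K$ combined with $N = r 2^K$; conceptually the statement is an immediate corollary of Theorem~\ref{thm:variance}, as its name suggests.
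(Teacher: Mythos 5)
Your proof is correct and takes essentially the same route as the paper: the corollary is obtained precisely by substituting $S^2({\bm z}) = S^2$ and $S_j^2 = 0$ into the variance formula of Theorem~\ref{thm:variance} and simplifying with $N = r2^K$. The only (harmless) difference is that you verify $S_j^2 = 0$ directly from the constancy of $\tau_{ij}$ across units under strict additivity, whereas the paper deduces this from the expansion (\ref{eq:s^2theta}); both arguments are valid.
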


It is well known (e.g., Wu and Hamada 2009 Chapter 4) that under the assumption of a linear model with additive independent and identically distributed errors with common variance $\sigma^2$, the sampling variance of estimators of super population-level estimands $\tss$ is $(4/n) \sigma^2$, where $n$ is a finite sample of units chosen randomly for the experiment from the super population. Here we show that an analogous result holds for estimators of finite population-level estimands under the strong assumption of strict additivity.

Next, to explore the effect of non-additivity on sampling variances, we consider a special case where $S^2({\bm z}) = S^2$ and $\rho({\bm z},{\bm z}^*) = \rho (< 1) $ for all ${\bm z}, {\bm z}^*$, which is known as \emph{Compound Symmetry} of the variance-covariance matrix. Substituting the two conditions into (\ref{eq:s^2theta}), we have
\begin{equation}
S_j^2 = \frac{1}{2^{K-2}}(1-\rho) S^2, \label{eq:s^2thetanon1}
\end{equation}
and consequently obtain the following corollary:
\begin{cor} \label{cor:nonadd}
If non-additivity of treatment effects is characterized by compound symmetry of the potential outcomes, then the sampling variance of $\hattfs$ is
\begin{equation}
\var(\hattfs|{\bm Y}) = \frac{4}{N} \big( 1 - \frac{1-\rho}{2^K} \big) S^2 . \label{eq:varthetahat_non1}
\end{equation} \label{cor:varthetahat_non1}
\end{cor}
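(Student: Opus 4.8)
The plan is to reduce the corollary to a direct substitution into the general variance formula of Theorem \ref{thm:variance}, using the compound-symmetry simplification already recorded in (\ref{eq:s^2thetanon1}). The only two quantities that enter (\ref{eq:varthetahat}) are $\sum_{\bm z} S^2({\bm z})$ and $S_j^2$, so I would first specialize each of these under the assumptions $S^2({\bm z}) = S^2$ and $\rho({\bm z},{\bm z}^*) = \rho < 1$ for all ${\bm z}, {\bm z}^*$, and only then combine them.

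First I would handle the leading term. Since every $S^2({\bm z})$ equals the common value $S^2$ and there are $J = 2^K$ treatment combinations, $\sum_{\bm z} S^2({\bm z}) = 2^K S^2$. Substituting this into the first term of (\ref{eq:varthetahat}) and invoking the replication identity $N = r 2^K$ gives $\frac{1}{2^{2(K-1)} r}\, 2^K S^2 = \frac{4}{2^K r} S^2 = \frac{4}{N} S^2$, so the bookkeeping of powers of two in the leading term collapses cleanly to $4/N$. Next I would treat the correction term by simply quoting (\ref{eq:s^2thetanon1}), namely $S_j^2 = \frac{1}{2^{K-2}}(1-\rho)S^2$, whence $\frac{1}{N} S_j^2 = \frac{4}{N}\,\frac{1-\rho}{2^K}\, S^2$. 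Subtracting this from the leading term immediately yields $\var(\hattfs\,|\,{\bm Y}) = \frac{4}{N}\big(1 - \frac{1-\rho}{2^K}\big) S^2$, which is exactly (\ref{eq:varthetahat_non1}).

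Because (\ref{eq:s^2thetanon1}) is taken as given, the argument is essentially pure substitution and there is no genuine obstacle; the one place demanding care is the consistent tracking of the powers of two together with $N = r 2^K$, which must be applied to \emph{both} terms in the same form before they are combined, since a slip there would misreport the factor multiplying $(1-\rho)$. If instead one wanted a self-contained proof that did not lean on (\ref{eq:s^2thetanon1}), the only nontrivial step would be re-deriving it, and the hard part would lie there rather than in the final assembly. That derivation is cleanest by writing $S_j^2 = 2^{-2(K-1)}\, {\bm g}_j' \Sigma\, {\bm g}_j$ for the $J \times J$ covariance matrix $\Sigma$ with diagonal $S^2$ and off-diagonal $\rho S^2$, and exploiting that ${\bm g}_j$ contains equal numbers of $+1$'s and $-1$'s, so that $\sum_{a} g_j(a) = 0$ and $\sum_{a} g_j(a)^2 = 2^K$; these give $\sum_{a \ne b} g_j(a) g_j(b) = \big(\sum_a g_j(a)\big)^2 - \sum_a g_j(a)^2 = -2^K$, hence ${\bm g}_j' \Sigma\, {\bm g}_j = 2^K(1-\rho) S^2$ and therefore $S_j^2 = \frac{1}{2^{K-2}}(1-\rho)S^2$, recovering (\ref{eq:s^2thetanon1}) and closing the argument.
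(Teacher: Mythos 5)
Your main argument is exactly the paper's proof: the paper likewise obtains the corollary by substituting the compound-symmetry conditions into the variance formula of Theorem \ref{thm:variance}, with (\ref{eq:s^2thetanon1}) supplying $S_j^2$, and your power-of-two bookkeeping ($\sum_{\bm z} S^2({\bm z}) = 2^K S^2$ together with $N = r2^K$) reproduces the paper's computation verbatim. The only genuine difference lies in your optional self-contained derivation of (\ref{eq:s^2thetanon1}): the paper proves that identity by a combinatorial count applied to the expansion (\ref{eq:s^2theta}) --- ${2^{K-1} \choose 2}$ unordered pairs inside each of $\Zplus_j$ and $\Zminus_j$ contributing with one sign, and the remaining ${2^K \choose 2} - 2{2^{K-1} \choose 2}$ cross pairs with the opposite sign --- whereas you evaluate the quadratic form ${\bm g}_j^{\prime} \Sigma {\bm g}_j$ directly from the balance properties $\sum_a g_j(a) = 0$ and $\sum_a g_j(a)^2 = 2^K$. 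Your route is cleaner and somewhat safer: it handles the signs automatically, whereas (\ref{eq:s^2theta}) as printed carries the cross term $\sum_{{\bm z} \in \Zplus_j} \sum_{{\bm z}^* \in \Zminus_j}$ with a $+$ sign (it must enter with a $-$ sign for the identity to hold, as the paper's own appendix calculation implicitly assumes), and the appendix proof of (\ref{eq:s^2thetanon1}) also misprints the leading term $\sum_{\bm z} S^2({\bm z}) = 2^K S^2$ as ${2^K \choose 2}$. Both routes land on $S_j^2 = (1-\rho)S^2/2^{K-2}$, and from there the final assembly is identical.
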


 Under the assumption of compound symmetry of the potential outcomes, to ensure positive definiteness of the $2^K \times 2^K$ matrix of population variances, $S^2({\bm z})$ and population covariances $S^2({\bm z}, {\bm z}^*)$, we must have $\rho > -1/(2^K-1)$. Therefore, from (\ref{eq:varthetahat_additivity}) and (\ref{eq:varthetahat_non1}), a third corollary on the bounds of $\var(\hattfs)$ can immediately be established:

\begin{cor}
Under compound symmetry, the sampling variance of each estimated factorial effect satisfies the following bounds:
\begin{equation}
\frac{4}{N} (1-\frac{1}{2^K-1}) S^2 < \var(\hattfs|{\bm Y}) \le \frac{4}{N} S^2 . \label{eq:varthetabounds}
\end{equation} \label{cor:varthetabounds}
\end{cor}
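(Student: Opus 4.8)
The plan is to derive both bounds directly from the closed-form expression for the sampling variance under compound symmetry, equation (\ref{eq:varthetahat_non1}), rather than returning to the general formula (\ref{eq:s^2theta}). The essential observation is that, once $S^2(\bm z)=S^2$ and $\rho(\bm z,\bm z^*)=\rho$ are imposed, $\var(\hattfs|\bm Y)$ becomes a function of the single scalar $\rho$, so the bounds follow by minimizing and maximizing this function over the admissible range of $\rho$.

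First I would record the admissible range and the monotonicity. Writing $\var(\hattfs|\bm Y)=\frac{4}{N}\big( 1-\frac{1-\rho}{2^K} \big)S^2$, one sees that this is an affine, strictly increasing function of $\rho$, since its $\rho$-derivative is $\frac{4}{N}\cdot\frac{S^2}{2^K}>0$; hence its extremes over an interval of $\rho$-values are attained at the endpoints. The upper end is $\rho\le 1$, as $\rho$ is a correlation coefficient, with $\rho=1$ corresponding exactly to strict additivity; the lower end is governed by positive-definiteness of the common $2^K\times 2^K$ variance--covariance matrix, which under compound symmetry requires $\rho>-1/(2^K-1)$. This last constraint is the one quoted just before the corollary, and it can be verified because the compound-symmetric matrix $S^2\big[(1-\rho)I+\rho\,\bm 1\bm 1^{\prime}\big]$ (with $\bm 1$ the all-ones vector) has eigenvalues $S^2(1-\rho)$ and $S^2\big(1+(2^K-1)\rho\big)$, the latter forcing $\rho>-1/(2^K-1)$.

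Next I would substitute the two endpoints. At $\rho=1$ the factor $1-\rho$ vanishes and the variance equals $\frac{4}{N}S^2$, recovering (\ref{eq:varthetahat_additivity}); since $\rho\le 1$ is attainable (additivity), this yields the weak upper bound $\var(\hattfs|\bm Y)\le \frac{4}{N}S^2$. For the lower bound I would set $\rho=-1/(2^K-1)$ and simplify: here $1-\rho=1+\frac{1}{2^K-1}=\frac{2^K}{2^K-1}$, so $\frac{1-\rho}{2^K}=\frac{1}{2^K-1}$, and the formula collapses to $\frac{4}{N}\big(1-\frac{1}{2^K-1}\big)S^2$. Because positive-definiteness demands the strict inequality $\rho>-1/(2^K-1)$, this endpoint value is not attained, giving the strict lower bound.

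The argument is almost entirely routine; the only points requiring care are bookkeeping. The main thing to get right is the asymmetry of the two inequalities -- strict on the left, weak on the right -- which traces back to whether the endpoint of the $\rho$-range is included: additivity ($\rho=1$) is an admissible (degenerate) model, whereas $\rho=-1/(2^K-1)$ is excluded by strict positive-definiteness. The only nontrivial computation is the one-line simplification showing that $\rho=-1/(2^K-1)$ reproduces precisely the factor $1-\frac{1}{2^K-1}$ in the stated lower bound.
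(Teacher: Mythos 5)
Your proposal is correct and follows essentially the same route as the paper: the paper also obtains the bounds directly by combining the compound-symmetry variance formula (\ref{eq:varthetahat_non1}) with the positive-definiteness constraint $\rho > -1/(2^K-1)$ and the additivity case (\ref{eq:varthetahat_additivity}) at $\rho = 1$. Your write-up merely makes explicit the details the paper leaves implicit (monotonicity in $\rho$, the eigenvalue justification of the constraint, and the endpoint substitutions), including the correct strict/weak asymmetry of the two inequalities.
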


Two interesting observations follow from Corollaries \ref{cor:varthetahat_non1} and \ref{cor:varthetabounds}. First, $\var(\hattfs)$ is largest under strict additivity, as observed by Imbens and Rubin (2012, Ch. 6) for a single-factor experiment. Second, under compound symmetry, the effect of non-additivity gets smaller as the number of factors $K$ increases.

As an illustration of how the above results can help understand the effect of lack of additivity on the sampling variance of estimated factorial effects, consider the case of a $2^2$ design with the four treatment combinations (-1,-1), (-1,1), (1,-1) and (1,1). Then under strict additivity, the estimators of the causal effects $\bar{\tau}_{.1}$,$\bar{\tau}_{.2}$ and $\bar{\tau}_{.3}$ have the same sampling variance $S^2/r$, where $r = N/4$ denotes the number of replications. Under the assumption of compound symmetry, the estimators have the same variance: $ (3 + \rho) S^2 / (4r).$

The assumption of a common correlation among potential outcomes of all pairs of treatment combinations is strong. A more realistic potential outcomes model may be one that assumes different correlations associated with different factors. Assume that, in a $2^2$ experiment, the potential outcomes are correlated only if they correspond to a fixed level of factor $1$, and are uncorrelated otherwise. This means, out of the six possible pairs of potential outcomes, only two -- $(Y(-1,-1), Y(-1,1))$ and $(Y(1,-1), Y(1,1))$ -- have a non-zero correlation coefficient $\rho$, and the matrix ${\bm R}$ of all correlation coefficients $\rho({\bm z}, {\bm z}^*)$ has the following structure:
\begin{equation} \label{eqn:Rdef}
{\bm R} = \left(
            \begin{array}{cccc}
              1 & \rho & 0 & 0 \\
              \rho & 1 & 0 & 0 \\
              0 & 0 & 1 & \rho \\
              0 & 0 & \rho & 1 \\
            \end{array}
          \right).
\end{equation}

\noindent Then, by successive application of (\ref{eq:s^2theta}) and Theorem \ref{thm:variance}, it follows that the sampling variance of $\widehat{\bar{\tau}}_{.1}$ equals $(3+\rho)S^2/(4r)$, whereas the sampling variances of $\widehat{\bar{\tau}}_{.2}$ and $\widehat{\bar{\tau}}_{.3}$ are both equal to $(3-\rho)S^2/(4r)$. This example makes it clear that the sampling variances of estimated factorial effects can be severely affected by lack of additivity, and may be different depending on the structure of the correlation matrix ${\bm R}$. Unfortunately, without additional information or assumptions, there is no way one can estimate the correlation matrix from data.

So far we have considered the sampling distribution of estimated factorial effects individually. For factorial designs, it is also important to study the joint distribution of all the estimated factorial effects. The following result gives an expression for the sampling covariance between two estimated factorial effects for a finite population.

\begin{thm}
Let $\tfs$ and $\tfss$ denote two distinct factorial effects. For a completely randomized treatment assignment mechanism, the covariance between the estimated factorial effects $\hattfs$ and $\hattfss$ is
\begin{eqnarray*}
&& \frac{1}{2^{2(K-1)}r} \Big [ \sum_{{\bm z} \in \Zminus_j \bigcap \Zminus_{j^{\prime}} } S^2({\bm z}) - \sum_{{\bm z} \in \Zminus_j \bigcap \Zplus_{j^{\prime}} } S^2({\bm z}) - \sum_{{\bm z} \in \Zplus_j \bigcap \Zminus_{j^{\prime}} } S^2({\bm z}) + \sum_{{\bm z} \in \Zplus_j \bigcap \Zplus_{j^{\prime}} } S^2({\bm z}) \Big] - \frac{1}{N}S_{j j^{\prime}}^2,
\end{eqnarray*}
where $S^2({\bm z})$ is given by (\ref{eq:varx}), and
\begin{equation}
S_{j j^{\prime}}^2 = \frac{1}{N-1} \sum_{i=1}^N [ \tau_{ij} - \tfs] [ \tau_{i j^{\prime}} - \tfss] \label{covartheta}
\end{equation}
denotes the finite population covariance between all unit level factorial effects $\tau_{ij}$ and $\tau_{i j^{\prime}}$, $i=1, \ldots, N$. \label{thm:covariance}
\end{thm}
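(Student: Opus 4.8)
The plan is to exploit the linearity of both estimators in the treatment-combination averages $\yob({\bm z})$ and reduce everything to the second-moment formulas already established in (\ref{eq:varybar}) and (\ref{eq:covybar}). Writing $g_j({\bm z})$ for the component of ${\bm g}_j$ indexed by ${\bm z}$ (so that $g_j({\bm z}) = +1$ iff ${\bm z} \in \Zplus_j$ and $g_j({\bm z}) = -1$ iff ${\bm z} \in \Zminus_j$), the estimators become $\hattfs = 2^{-(K-1)} \sum_{\bm z} g_j({\bm z})\, \yob({\bm z})$ and similarly for $\hattfss$. By bilinearity of covariance,
\begin{equation*}
\cov(\hattfs, \hattfss \mid {\bm Y}) = 2^{-2(K-1)} \sum_{\bm z} \sum_{{\bm z}^*} g_j({\bm z}) g_{j^{\prime}}({\bm z}^*)\, \cov\big(\yob({\bm z}), \yob({\bm z}^*) \mid {\bm Y}\big).
\end{equation*}
First I would split this double sum into the diagonal part (${\bm z} = {\bm z}^*$) and the off-diagonal part (${\bm z} \ne {\bm z}^*$), substituting (\ref{eq:varybar}) for the former and (\ref{eq:covybar}) for the latter.

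For the diagonal part the coefficient is $\tfrac{N-r}{rN}$, which I would decompose as $\tfrac{1}{r} - \tfrac{1}{N}$. The $\tfrac{1}{r}$ piece carries the factor $g_j({\bm z}) g_{j^{\prime}}({\bm z})$, which equals $+1$ exactly on $(\Zplus_j \cap \Zplus_{j^{\prime}}) \cup (\Zminus_j \cap \Zminus_{j^{\prime}})$ and $-1$ on $(\Zplus_j \cap \Zminus_{j^{\prime}}) \cup (\Zminus_j \cap \Zplus_{j^{\prime}})$; recording these signs produces precisely the four-term bracketed expression in the statement, scaled by $2^{-2(K-1)}/r$, which matches the first term of the claimed formula.

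It then remains to show that the leftover $-\tfrac{1}{N}$ diagonal piece together with the entire off-diagonal sum collapses to $-\tfrac{1}{N} S_{j j^{\prime}}^2$. Combining them gives
\begin{equation*}
-\frac{1}{N}\, 2^{-2(K-1)} \Big[ \sum_{\bm z} g_j({\bm z}) g_{j^{\prime}}({\bm z})\, S^2({\bm z}) + \sum_{{\bm z} \ne {\bm z}^*} g_j({\bm z}) g_{j^{\prime}}({\bm z}^*)\, S^2({\bm z}, {\bm z}^*) \Big],
\end{equation*}
so the task reduces to the algebraic identity that the bracket equals $2^{2(K-1)} S_{j j^{\prime}}^2$. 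I would verify this by expanding the definition (\ref{covartheta}) directly: since $\tau_{ij} - \tfs = 2^{-(K-1)} \sum_{\bm z} g_j({\bm z})(Y_i({\bm z}) - \bar{Y}({\bm z}))$, the product $(\tau_{ij} - \tfs)(\tau_{i j^{\prime}} - \tfss)$ is a double sum over ${\bm z}, {\bm z}^*$, and averaging over $i$ with divisor $N-1$ turns the inner expression into $S^2({\bm z})$ on the diagonal and $S^2({\bm z}, {\bm z}^*)$ off the diagonal, reproducing the bracket exactly. This identity is the analogue for $S_{j j^{\prime}}^2$ of the expansion (\ref{eq:s^2theta}) used for $S_j^2$, so the present theorem reduces to the diagonal case $j = j^{\prime}$ of Theorem \ref{thm:variance}.

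I expect the main obstacle to be purely one of bookkeeping: keeping the sign pattern of $g_j({\bm z}) g_{j^{\prime}}({\bm z}^*)$ straight across the diagonal/off-diagonal split, so that the $\tfrac{1}{r}$ and $-\tfrac{1}{N}$ contributions are correctly separated. Once the decomposition $\tfrac{N-r}{rN} = \tfrac{1}{r} - \tfrac{1}{N}$ is in place, the $\tfrac{1}{r}$ part yields the partition into the four intersection sets $\Zpm_j \cap \Zpm_{j^{\prime}}$ while the $-\tfrac{1}{N}$ part recombines with the off-diagonal covariances to form the full double sum defining $S_{j j^{\prime}}^2$; the two pieces then assemble cleanly into the stated expression with no further subtlety.
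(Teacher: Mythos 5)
Your proposal is correct and takes essentially the same route as the paper's own proof: both expand $\cov(\hattfs,\hattfss\mid{\bm Y})$ as the quadratic form $2^{-2(K-1)}{\bm g}_j^{\prime}\,\var(\bar{\bm Y}^{\mathrm{obs}})\,{\bm g}_{j^{\prime}}$, substitute (\ref{eq:varybar}) and (\ref{eq:covybar}), and invoke the expansion of $S_{jj^{\prime}}^2$ into diagonal $S^2({\bm z})$ and off-diagonal $S^2({\bm z},{\bm z}^*)$ terms, with the sign pattern of $g_j({\bm z})g_{j^{\prime}}({\bm z})$ yielding the four intersection sets. The only difference is organizational---the paper solves its $S_{jj^{\prime}}^2$ expansion for the off-diagonal sum and substitutes it back, whereas you group the $-1/N$ diagonal remainder with the off-diagonal covariances directly---which is the same algebra.
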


Note that $\cov(\hattfs,\hattfss)$ has a form that is very similar to the expression for $\var(\hattfs)$ obtained in Theorem \ref{thm:variance} in the sense that it involves the term $S_{j j^{\prime}}^2$ which is non-estimable like $S_j^2$ in Theorem \ref{thm:variance}. Noting that under strict additivity, $S^2({\bm z}) = S^2$ for all ${\bm z} \in {\mathbb Z}$ and $S^2_{j j^{\prime}} = 0$ for all $j \ne j^{\prime}$, we have the following corollary:

\begin{cor}
For all $j \ne j^{\prime}$, estimated factorial effects $\hattfs$ and $\hattfss$ are uncorrelated under strict additivity.
\end{cor}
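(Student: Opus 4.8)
The plan is to evaluate the covariance formula of Theorem \ref{thm:covariance} under the two consequences of strict additivity that the excerpt records just above the statement, namely $S^2({\bm z}) = S^2$ for every ${\bm z} \in \mathbb{Z}$ and $S^2_{jj^{\prime}} = 0$ for every $j \ne j^{\prime}$. The second fact is immediate to use: it kills the term $-\tfrac{1}{N}S^2_{jj^{\prime}}$ in the covariance expression outright, so the entire burden of the proof is to show that the remaining bracketed term also vanishes once we substitute $S^2({\bm z}) = S^2$.

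After factoring the common $S^2$ out of the bracket, the first term becomes $\tfrac{S^2}{2^{2(K-1)}r}$ times the signed count
\[
|\Zminus_j \cap \Zminus_{j^{\prime}}| - |\Zminus_j \cap \Zplus_{j^{\prime}}| - |\Zplus_j \cap \Zminus_{j^{\prime}}| + |\Zplus_j \cap \Zplus_{j^{\prime}}| .
\]
The key step is to recognize this signed count as the Euclidean inner product ${\bm g}_j^{\prime}{\bm g}_{j^{\prime}}$. Indeed, by the definition (\ref{eqn:Z}) the entry of ${\bm g}_j$ indexed by a treatment combination ${\bm z}$ equals $+1$ exactly when ${\bm z} \in \Zplus_j$ and $-1$ when ${\bm z} \in \Zminus_j$, and likewise for ${\bm g}_{j^{\prime}}$. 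Hence the product of the two entries at ${\bm z}$ is $+1$ on $\Zplus_j \cap \Zplus_{j^{\prime}}$ and on $\Zminus_j \cap \Zminus_{j^{\prime}}$, and $-1$ on the two mixed intersections; summing over all ${\bm z}$ reproduces the displayed signed count exactly, so it equals ${\bm g}_j^{\prime}{\bm g}_{j^{\prime}}$.

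It then remains to argue ${\bm g}_j^{\prime}{\bm g}_{j^{\prime}} = 0$ for $j \ne j^{\prime}$, which is the orthogonality of distinct factorial contrasts. I would obtain this from the group structure of the ${\bm g}$-vectors: the element-wise product ${\bm g}_j \odot {\bm g}_{j^{\prime}}$ is again one of the factorial vectors, say ${\bm g}_{j^{\prime\prime}}$, built by multiplying the corresponding main-effect generators, and since $j \ne j^{\prime}$ we have $j^{\prime\prime} \ne 0$; every such ${\bm g}_{j^{\prime\prime}}$ carries equally many $+1$'s and $-1$'s by construction, so its entries sum to zero. But that sum is precisely ${\bm g}_j^{\prime}{\bm g}_{j^{\prime}}$, giving the desired vanishing. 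Substituting back, the first term is zero, and combined with $S^2_{jj^{\prime}} = 0$ we conclude $\cov(\hattfs, \hattfss \,|\, {\bm Y}) = 0$.

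I expect the only genuine obstacle to be the orthogonality identity ${\bm g}_j^{\prime}{\bm g}_{j^{\prime}} = 0$: one must verify carefully that the element-wise product of two distinct generators is a balanced contrast (equivalently, a non-trivial character of $(\mathbb{Z}/2)^K$) rather than the all-ones vector ${\bm g}_0$. Once that closure-and-balance fact is in hand, the rest is bookkeeping that matches the four intersection counts to the four sign combinations.
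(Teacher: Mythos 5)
Your proposal is correct and takes essentially the same route as the paper: the paper obtains this corollary by substituting the two strict-additivity consequences, $S^2({\bm z}) = S^2$ for all ${\bm z}$ and $S^2_{jj^{\prime}} = 0$ for $j \ne j^{\prime}$, directly into the covariance expression of Theorem \ref{thm:covariance}, exactly as you do. The only difference is that the paper leaves implicit the final cancellation --- that the signed sum over the four intersections vanishes because ${\bm g}_j^{\prime}{\bm g}_{j^{\prime}} = 0$ (equivalently, each intersection has cardinality $2^{K-2}$) --- which your argument via the group structure of the contrast vectors makes explicit.
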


To explore the effect of non-additivity on $\cov(\hattfs,\hattfss)$, we again consider the case of compound symmetry. The following corollary can be established after some algebraic manipulations (proof in Appendix).

\begin{cor}
For all $j \ne j^{\prime}$, estimated factorial effects $\hattfs$ and $\hattfss$ are uncorrelated under compound symmetry.
\end{cor}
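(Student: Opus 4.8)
The plan is to start from the covariance formula in Theorem \ref{thm:covariance} and show that, under compound symmetry, \emph{each} of its two terms vanishes on its own. Throughout I write $g_j(\bm z)$ for the element of the contrast vector $\bm g_j$ indexed by the treatment combination $\bm z$, so that $\Zplus_j = \{\bm z : g_j(\bm z) = +1\}$ and $\Zminus_j = \{\bm z : g_j(\bm z) = -1\}$. The structural fact I would lean on is that the contrast vectors $\bm g_0, \bm g_1, \ldots, \bm g_{J-1}$ are mutually orthogonal (they are the columns of a Hadamard-type matrix obtained by elementwise products of the main-effect contrasts), so that $\bm g_j' \bm g_{j'} = 0$ for $j \ne j'$; moreover, each $\bm g_j$ with $j \ge 1$ has equally many $+1$ and $-1$ entries by its very definition, so $\bm g_j' \bm 1 = 0$.

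For the first (estimable) term, I would set $S^2(\bm z) = S^2$ for every $\bm z$, so that the bracket reduces to $S^2$ times the signed count $|\Zminus_j \cap \Zminus_{j'}| - |\Zminus_j \cap \Zplus_{j'}| - |\Zplus_j \cap \Zminus_{j'}| + |\Zplus_j \cap \Zplus_{j'}|$. The key observation is that a given $\bm z$ contributes exactly $g_j(\bm z)\, g_{j'}(\bm z)$ to this count: checking the four sign combinations confirms that the $(+,+)$ and $(-,-)$ cases carry a $+1$ and the mixed cases a $-1$, which is precisely the product rule. Hence the signed count equals $\sum_{\bm z} g_j(\bm z)\, g_{j'}(\bm z) = \bm g_j' \bm g_{j'}$, which is $0$ for $j \ne j'$ by orthogonality, so the first term is zero.

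For the second term I would show $S_{jj'}^2 = 0$. Using $\tau_{ij} - \tfs = 2^{-(K-1)} \bm g_j' (\bm Y_i - \bar{\bm Y})$ from (\ref{eq:defunitfactorial}) and (\ref{eq:altdefpopfactorial}), substitution into the definition (\ref{covartheta}) expresses $S_{jj'}^2$ as the quadratic form $2^{-2(K-1)}\, \bm g_j' \bm \Sigma \bm g_{j'}$, where $\bm \Sigma$ is the $J \times J$ finite-population covariance matrix of the potential-outcome vectors, carrying $S^2(\bm z)$ on its diagonal and $S^2(\bm z, \bm z^*)$ off it. Under compound symmetry $\bm \Sigma = S^2[(1-\rho)\bm I + \rho\, \bm 1 \bm 1']$, whence $\bm g_j' \bm \Sigma \bm g_{j'} = S^2[(1-\rho)\, \bm g_j' \bm g_{j'} + \rho\, (\bm g_j' \bm 1)(\bm 1' \bm g_{j'})]$. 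Both inner products vanish — the first by orthogonality of distinct contrasts, the second by the balance of each nontrivial contrast — so $S_{jj'}^2 = 0$. Adding the two terms gives $\cov(\hattfs, \hattfss | \bm Y) = 0$.

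The computation is short once the two orthogonality identities are in hand, so there is no serious obstacle beyond bookkeeping; the step that requires the most care is verifying the sign-product identity that collapses the signed count of intersection sizes to $\bm g_j' \bm g_{j'}$, since it is exactly this identity that makes the estimable term reduce to the same inner product that controls $S_{jj'}^2$. I would also remark that the argument shows slightly more than claimed: the first term already vanishes under constant variances alone, while the second needs only orthogonality of the $\bm g$-vectors together with the compound-symmetry form of $\bm \Sigma$, so compound symmetry is precisely what forces both pieces to vanish at once.
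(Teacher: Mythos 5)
Your proof is correct, and in substance it is the argument the paper intends: specialize the covariance formula of Theorem \ref{thm:covariance} to compound symmetry and check that both terms vanish separately. Two points of comparison are worth making. First, although the text promises a proof of this corollary in the Appendix, none actually appears there: the only candidate entry, titled as the proof of Corollary \ref{cor:varthetahat_non1}, in fact proves the Bayesian Corollary \ref{eqn:cormv}, and the corollary in question has no appendix proof at all; so your write-up supplies algebra the paper omits. Second, where the paper does carry out the closest analogous computation --- the proof of Equation (\ref{eq:s^2thetanon1}) for $S_j^2$ under compound symmetry --- it argues by counting pairs of treatment combinations within $\Zplus_j$ and $\Zminus_j$; your route instead writes $S_{jj^{\prime}}^2 = 2^{-2(K-1)}\,{\bm g}_j^{\prime}{\bm \Sigma}\,{\bm g}_{j^{\prime}}$ with ${\bm \Sigma} = S^2\big[(1-\rho){\bm I} + \rho\,{\bm 1}{\bm 1}^{\prime}\big]$, and disposes of both the estimable bracket (via the signed-count identity reducing it to $S^2\,{\bm g}_j^{\prime}{\bm g}_{j^{\prime}}$) and the non-estimable term with the single pair of identities ${\bm g}_j^{\prime}{\bm g}_{j^{\prime}} = 0$ and ${\bm g}_j^{\prime}{\bm 1} = 0$. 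That organization is cleaner than pair-counting, and it yields the strict-additivity corollary for free as the case $\rho = 1$, where ${\bm \Sigma} = S^2{\bm 1}{\bm 1}^{\prime}$. One small caveat: your closing remark that compound symmetry is ``precisely'' what forces both pieces to vanish overstates necessity --- your own computation shows that any covariance matrix satisfying ${\bm g}_j^{\prime}{\bm \Sigma}\,{\bm g}_{j^{\prime}} = 0$ (together with constant variances for the first term) suffices, so compound symmetry is sufficient but not necessary.
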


\vspace{0.1 in}

\noindent \textbf{Estimation of variance and covariance and inference on factorial effects}

The Neymanian inference procedure depends on the estimation of the factorial effects and their sampling variances and covariances. To estimate the sampling variances and covariances, we first seek an unbiased estimator of $S^2({\bm z})$. Clearly, the sample variance for the treatment combination ${\bm z}$, given by
\begin{equation}
s^2({\bm z}) = \frac{1}{r-1} \sum_i (\yo_i - \yob({\bm z}))^2, \label{eq:sampvariance}
\end{equation}
is an unbiased estimator of $S^2({\bm z})$. Inserting $s^2({\bm z})$ in place of $S^2({\bm z})$, the Neymanian estimator of $\var(\hattfs)$ can thus be obtained from Theorem \ref{thm:variance} as:
\begin{equation}
\widehat{\var}_{\mathrm{Ney}}(\hattfs | {\bm Y}) = \frac{1}{2^{2(K-1)}r} \sum_{{\bm z} } s^2({\bm z}). \label{eq:sethetahat}
\end{equation}
%\textcolor{red}{It is so not frequentist to condition on $Y$? change the notation above?}
Because $S_j^2$ is not observable and estimable without assumptions, $\widehat{\var}_{\mathrm{Ney}}(\hattfs)$ in expectation is an upper bound on the true sampling variance of $\hattfs$. The bias $S_j^2/N$ depends on the correlation matrix ${\bm R}$ of potential outcomes defined in \eqref{eqn:Rdef}. However, from Theorem 3, it follows that the estimator
\small
\begin{eqnarray}
\widehat{\cov}_{\mathrm{Ney}}(\hattfs,\hattfss | {\bm Y}) = - \frac{1}{2^{2(K-1)}r} \Big [ \sum_{{\bm z} \in \Zminus_j \bigcap \Zminus_{j^{\prime}} } s^2({\bm z}) - \sum_{{\bm z} \in \Zminus_j \bigcap \Zplus_{j^{\prime}} } s^2({\bm z})- \sum_{{\bm z} \in \Zplus_j \bigcap \Zminus_{j^{\prime}} } s^2({\bm z}) + \sum_{{\bm z} \in \Zplus_j \bigcap \Zplus_{j^{\prime}} } s^2({\bm z}) \Big]. \label{eq:covthetahat}
\end{eqnarray}
\normalsize
can overestimate or underestimate $\cov(\hattfs, \hattfss|{\bm Y})$ depending on whether $S^2_{j j^{\prime}}$ is positive or negative.
\vspace{0.1 in}

Let ${\bm \tau}$ denote the vector of the $J-1$ population-level factorial effects. To draw inference about ${\bm \tau}$, consider the statistic:
\begin{equation}
T^N = \hat{\bm \tau}^{\prime} \hat{\Sigma}_{\hat{\bm \tau}}^{-1} \hat{\bm \tau}, \label{statistic_Neyman}
\end{equation}
where $\hat{\bm \tau}$ is the Neymanian estimator of ${\bm \tau}$; $\Sigma_{\hat{\bm \tau}}$ denotes the true covariance matrix of $\hat{\bm \tau}$, and $\hat{\Sigma}_{\hat{\bm \tau}}$ its estimator. The diagonal elements of $\hat{\Sigma}_{\hat{\bm \tau}}$ can be obtained from (\ref{eq:sethetahat}), and the off-diagonals from (\ref{eq:covthetahat}). Note that $\hat{\Sigma}_{\hat{\bm \tau}}$ will be a diagonal matrix under the assumption of equal variance of potential outcomes for all treatment combinations. For $0 < \alpha < 1$, a $100(1-\alpha)\%$ (Neymanian) confidence region for ${\bm \tau}$ is:
\begin{equation}
\big \{ \hat{\bm \tau}: p_{1-\alpha/2} \le T^N \le p_{\alpha/2} \big\}, \label{eq:ci_group}
\end{equation}
where $p_{\alpha}$ denotes the upper-$\alpha$ point of the asymptotic distribution of $T^N$.

From Corollary \ref{cor:varthetahat_additivity}, it follows that under the assumption of strict additivity, $T^N = (N/4) \hat{\bm \tau}^{\prime} \hat{\bm \tau} / \hat{S}^2$, where $\hat{S}^2$ is the estimator of the common variance $S^2$. Substituting $\hat{S}^2 = \sum_{{\bm z}} s^2({\bm z}) / 2^K$, it is easy to see that, under strict additivity, $T^N$ is the ratio of the treatment mean sum of squares (MSTr) and residual mean sum of squares (MSR), as defined in the standard analysis of variance framework (e.g., Montgomery 2000). Following the work of Wilk (1955), it can be argued that if strict additivity holds, the randomization distribution of $T^N$ can be approximated by an $F$ distribution with $J-1, N-J$ degrees of freedom. Under sampling from a normal super population when the inference is about the $J-1$-vector of super-population estimands ${\bm \tau}^{\mathrm{SP}}$ instead of ${\bm \tau}$, this result is exact and much more straightforward to prove.

Also, under a normal approximation (known to be valid under strict additivity), $100(1-\alpha)\%$ (Neymanian) confidence intervals for $\tfs$ can be obtained as
\begin{equation}
\hattfs \pm z_{\alpha/2} \widehat{\mathrm{S.E.}}(\hattfs), \label{eq:ci_indiv}
\end{equation}
 where $\widehat{\mathrm{S.E.}}(\hattfs) = \sqrt{\widehat{\var}_{\mathrm{Ney}} (\hattfs)}$ denotes the estimated standard error of $\hattfs$, which can be obtained from (\ref{eq:sethetahat}), and $z_{\alpha}$ denotes the upper $\alpha$ point of the standard normal distribution.

Neymanian inference on individual or a collection of factorial effects under the proposed framework is essentially the same as the linear-model-based inference that has been widely used by experimenters and analysts for decades (see, for example, Wu and Hamada 2009; Box, Hunter and Hunter 2005). However, as noted by Imbens and Rubin (2012, Ch. 6) for $K=1$, it is imperative that under lack of additivity, the asymptotic confidence intervals given by (\ref{eq:ci_group}) and (\ref{eq:ci_indiv}) are viewed as \emph{conservative} in the sense that they are wider than the ``true'' intervals that involve the actual standard errors, which are overestimated in the Neymanian framework. Knowledge of the correlation matrix ${\bm R}$, defined in \eqref{eqn:Rdef}, can shorten the confidence intervals by an amount that depends on the structure of ${\bm R}$ and its entries. To illustrate this aspect more formally using the example of the $2^2$ experiment discussed earlier, consider the following four correlation structures:

\scriptsize
\[ {\bm R}_1 = \left(
            \begin{array}{cccc}
              1 & 1 & 1 & 1 \\
              1 & 1 & 1 & 1 \\
              1 & 1 & 1 & 1 \\
              1 & 1 & 1 & 1 \\
            \end{array}
          \right),
   {\bm R}_2 = \left(
            \begin{array}{cccc}
              1 & \rho & \rho & \rho \\
              \rho & 1 & \rho & \rho \\
              \rho & \rho & 1 & \rho \\
              \rho & \rho & \rho & 1 \\
            \end{array}
          \right),
   {\bm R}_3 = \left(
            \begin{array}{cccc}
              1 & \rho & 0 & 0 \\
              \rho & 1 & 0 & 0 \\
              0 & 0 & 1 & \rho \\
              0 & 0 & \rho & 1 \\
            \end{array}
          \right),
   {\bm R}_4 = \left(
            \begin{array}{cccc}
              1 & \rho_1 & \rho_2 & 0 \\
              \rho_1 & 1 & 0 & \rho_2 \\
              \rho_2 & 0 & 1 & \rho_1 \\
              0 & \rho_2 & \rho_1 & 1 \\
            \end{array}
          \right).\]
\normalsize
Note that in order to be positive definite, ${\bm R}_1$ must satisfy $-1/3 \le \rho \le 1$ and ${\bm R}_3$ must satisfy $|\rho_1 + \rho_2| < 1$. For compound symmetry, ${\bm R}_2$ is positive definite for all $-1 \le \rho \le 1$. Assume that in all the cases the potential outcomes have the same variance $S^2$ for all four treatment combinations. Table \ref{tab:add_neyman} shows the 95\% confidence intervals (without any corrections for multiple testing) for the three factorial effects $\bar{\tau}_{.1}, \bar{\tau}_{.2}$ and $\bar{\tau}_{.3}$. Lack of knowledge of ${\bm R}$ leads to the same confidence interval $\pm 1.96 \times 2 \frac{s}{\sqrt{N}}$ for each factorial effect on every occasion, whereas better inference (tighter confidence intervals) is possible if such knowledge can be incorporated by conditioning on covariates and replacing partial correlations for correlations.

\begin{table}
\caption{Effect of lack of additivity of inference of factorial effects for a $2^2$ design} \label{tab:add_neyman}
\small
\centering
\begin{tabular}{c|l|c}
\hline
\multicolumn{1}{c|}{Correlation} & \multicolumn{2}{c}{95\% CI for factorial effects under} \\
\multicolumn{1}{c|}{structure}& \multicolumn{1}{c|}{perfect knowledge of ${\bm R}$} & \multicolumn{1}{c}{assumption of strict additivity (${\bm R} = {\bm R}_1$}) \\ \hline
                     &                                                                      &                                                          \\
${\bm R}_1$          &   $\pm 1.96 \times 2 \frac{s}{\sqrt{N}}$ for all effects             &   $\pm 1.96 \times 2 \frac{s}{\sqrt{N}}$ for all effects \\ \hline
                     &                                                                      &                                                          \\
${\bm R}_2$          &   $\pm 1.96 \times \sqrt{3+\rho} \frac{s}{\sqrt{N}}$ for all effects &   $\pm 1.96 \times 2 \frac{s}{\sqrt{N}}$ for all effects \\ \hline
                     &                                                                      &                                                          \\
${\bm R}_3$          &   $\pm 1.96 \times \sqrt{3-\rho} \frac{s}{\sqrt{N}}$ for $\theta_1$  &   $\pm 1.96 \times 2 \frac{s}{\sqrt{N}}$ for all effects \\
                     &   $\pm 1.96 \times \sqrt{3+\rho} \frac{s}{\sqrt{N}}$ for $\theta_2, \theta_{12}$ &                                                         \\ \hline
                     &                                                                      &                                                          \\
${\bm R}_4$          &   $\pm 1.96 \times \sqrt{3-(\rho_1-\rho_2)} \frac{s}{\sqrt{N}}$ for $\theta_1$,  &                                                         \\
                     &   $\pm 1.96 \times \sqrt{3-(\rho_2-\rho_1)} \frac{s}{\sqrt{N}}$ for $\theta_2$, &  $\pm 1.96 \times 2 \frac{s}{\sqrt{N}}$ for all effects    \\
                     &   $\pm 1.96 \times \sqrt{3-(\rho_1+\rho_2)} \frac{s}{\sqrt{N}}$ for $\theta_{12}$ &                                                     \\ \hline
\end{tabular}
\end{table}

\normalsize

\section{Fisherian randomization-based inference for $2^K$ factorial designs} \label{sec: fisher}

Randomization tests are useful tools because they assess statistical significance of treatment effects from randomized experiments without making any assumption whatsoever about the distribution of the test statistic. Such tests can be used to test Fisher's \emph{sharp null hypothesis} (see Rubin (1980)) of no factorial effect at unit levels, which is a much stronger hypothesis than the traditional one of no average factorial effects. Randomization tests have rarely been studied in the context of factorial experiments, except for the work by Loughin and Noble (1997), who studied such tests in the framework of a linear regression model for the observed response with additive error (in other words, invoking the assumption of strict additivity). In the following paragraphs, based on our proposed framework, we present a mathematical formulation for randomization-based inference procedures for a randomized $2^K$ factorial experiment. In particular, we aim at extending the procedure described by Imbens and Rubin (2012) to obtain ``Fisherian Fiducial'' intervals (Fisher 1930; Wang 2000) for the $J-1$ factorial effects $\tfs$.

To obtain Fisherian intervals, we first obtain estimates $\hattfs$ of the factorial effects from the observed data using (\ref{eq:defpopfactorialest}). The next task is to obtain ranges of plausible values for each factorial effect under the stated randomization scheme, using the above estimates. This is done by (i) considering a sequence of sharp null hypotheses on the factorial effects, (ii) imputing the missing potential outcomes under each sharp null, (iii) computing $p$-values for each factorial effect using their randomization distributions and (iv) identifying ranges of values for the factorial effects for which the $p$ values are not extreme.

Let ${\bm \tau}_i = (\tau_{i1}, \ldots, \tau_{i,J-1})^{\prime}$ denote the $(J-1)$-vector of unit-level factorial effects for unit $i$. Define the sharp null hypothesis $$ H_0^{\bm \eta}: {\bm \tau}_i = {\bm \eta}  \;\  \forall \;\ i = 1, \ldots, N,$$ where ${\bm \eta} = (\eta_1, \ldots, \eta_{J-1})^{\prime}$ is a vector of constants. Let ${\bm G}$ denote the $J \times J$ matrix whose columns are the vectors ${\bm g}_0, \ldots, {\bm g}_{J-1}$. Then from (\ref{eq:defunitfactorial}) it immediately follows that
\begin{equation}
{\bm Y}_i = {\bm G}^{\prime} \left(
                               \begin{array}{c}
                                 \tau_{i0} \\
                                 {\bm \tau}_i \\
                               \end{array}
                             \right),
 \label{eq:linmodelform}
\end{equation}
where $\tau_{i0} = 2^{-K} g_0^{\prime} {\bm Y}_i$ denotes the mean of ${\bm Y}_i$.

Recall that for the $i$th experimental unit, the experimenter observes only one potential outcome $\yo_i$.  Let $\ym_i$ denote the $(J-1)$-vector of the missing potential outcomes. Note that the rows of the $J \times K$ submatrix formed by columns 2 to $K+1$ of ${\bm G}$ represents the $J$ treatment combinations. Denote by ${\bm g}_i^{\mathrm{obs}}$ the row of ${\bm G}$ that contains the treatment combination ${\bm z}$ assigned to unit $i$, and let the submatrix formed by the remaining $J-1$ rows be ${\bm G}_i^{\mathrm{mis}}$. Then from (\ref{eq:linmodelform}) we can write
\begin{equation}
\left(
  \begin{array}{c}
    \yo_i \\
    \ym_i \\
  \end{array}
\right) =
\left(
            \begin{array}{c}
              {\bm g}_i^{\mathrm{obs}} \\
              {\bm G}_i^{\mathrm{mis}} \\
            \end{array}
          \right)
\left(
  \begin{array}{c}
    \tau_{i0} \\
    {\bm \tau}_i/2 \\
  \end{array}
\right) \label{eq:totalpartition} \;.
\end{equation}

Assuming that $H_0^{\bm \eta}$ is true, imputation of the vector of missing potential outcomes $\ym_i$ requires two simple steps:
\begin{enumerate}
\item Estimate $\tau_{i0}$ as $\hat{\tau}_{i0} = Y_i^{\mathrm{obs}} - (1/2) \tilde{\bm g}_i^{\mathrm{obs} \prime} {\bm \eta}$, where $\tilde{\bm g}_i^{\mathrm{obs}}$ is the vector ${\bm g}_i^{\mathrm{obs}}$ without its first element (which is unity).
\item Impute the missing potential outcomes for the $i$th unit using
\[ Y_i^{\mathrm{mis}} = {\bm G}_i^{\mathrm{mis}} \left(
  \begin{array}{c}
    \hat{\tau}_{i0} \\
    {\bm \eta}/2 \\
  \end{array}
\right).
\]
\end{enumerate}

Having generated the complete set of potential outcomes for all $N$ experimental units, we now compute the estimated factorial effects $\hattfs, j=1, \ldots, J-1$, for \emph{each} possible assignment of $N$ experimental units to the $J$ treatment combinations, generating a randomization distribution for each effect. In practice, owing to the prohibitive number of possible arrangements, typically a random sample of the set of all possible assignments is obtained. From the randomization distribution, a $p$-value for each factorial effect is computed by adding the probabilities of all assignments that lead to a value as or more extreme than the value observed for the estimate of that factorial effect.

Let $p(\eta_j)$ denote the $p$-value for the factorial effect $\hattfs$, $j=1, \ldots, J-1$ under the sharp null hypothesis $H_0^{\bm \eta}$. The function $p(\eta_j)$ is monotonic in $\eta_j$. Considering different sharp null hypotheses by varying ${\bm \eta}$, it is possible to obtain $\eta_j^L$ and $\eta_j^U$ satisfying
\begin{eqnarray}
\zeta_j^L &=& \sup_{\eta_j} \{ \eta_j: p(\eta_j) \le \alpha/2 \}, \label{eq:zetaL}\\
\zeta_j^U &=& \inf_{\eta_j} \{ \eta_j: p(\eta_j) \ge 1-\alpha/2 \} \label{eq:zetaU},
\end{eqnarray}
for $0 < \alpha < 1$. Then, $[\zeta_j^L, \zeta_j^U]$ represents $100(1-\alpha)\%$ ``Fisherian'' interval for the factorial effect $\tfs$.

We now illustrate the proposed approach using a simulated $2^2$ experiment with $N = 20$ experimental units. The four potential outcomes for each unit are generated using a multivariate normal model with mean vector $(10,12,13,15)^{\prime}$ and covariance matrix
\[  {\bm R}_2 = \left(
            \begin{array}{cccc}
              1 & .5 & .5 & .5 \\
              .5 & 1 & .5 & .5 \\
              .5 & .5 & 1 & .5 \\
              .5 & .5 & .5 & 1 \\
            \end{array}
          \right).
\]

\begin{table}[htbp]
\small
\caption{Observed outcomes for the 20 experimental units}
\begin{center}
\begin{tabular}{c|r|r|r|r}
\multicolumn{1}{c|}{Unit} & \multicolumn{1}{c|}{$(-1,-1)$} & \multicolumn{1}{c|}{$(-1,1)$} & \multicolumn{1}{c|}{$(1,-1)$} & \multicolumn{1}{c}{$(1,1)$} \\ \hline
 1  &         &         & 11.9445 &         \\ \hline
 2  &         &         & 12.5132 &         \\ \hline
 3  &         &         &         & 15.7970 \\ \hline
 4  &         &         &         & 15.6979 \\ \hline
 5  &         &         &         & 16.1417 \\ \hline
 6  &         &         & 14.7106 &         \\ \hline
 7  &         & 12.7757 &         &         \\ \hline
 8  & 10.1216 &         &         &         \\ \hline
 9  &         & 11.8241 &         &         \\ \hline
10  & 8.9538  &         &         &         \\ \hline
11  & 11.7702 &         &         &         \\ \hline
12  &         & 10.3832 &         &         \\ \hline
13  &         &         & 12.3319 &         \\ \hline
14  &         &         & 13.3999 &         \\ \hline
15  &         &         &         & 13.6675 \\ \hline
16  &         & 13.5522 &         &         \\ \hline
17  &         & 10.1510 &         &         \\ \hline
18  & 10.3421 &         &         &         \\ \hline
19  &         &         &         & 14.0710 \\ \hline
20  & 10.5881 &         &         &         \\ \hline
\end{tabular} \label{tab:obs_outcomes}
\end{center}
\end{table}

The true values of the three estimands are $\bar{\tau}_{.1} = 3$, $\bar{\tau}_{.2} = 2$ and $\bar{\tau}_{.3} = 0$. One set of observed outcomes obtained using a completely randomized treatment assignment is shown in Table \ref{tab:obs_outcomes}. From the observed outcomes, point estimates of the factorial effects are obtained as $\widehat{\bar{\tau}}_{.1} = 2.98$, $\widehat{\bar{\tau}}_{.2} = 1.74$, and $\widehat{\bar{\tau}}_{.3} = 0.36$.

We formulate 100 different sharp null hypotheses of the form ${\bm \tau}_i = {\bm \eta}$ by choosing 100 different vectors ${\bm \eta}$ uniformly from the region ${\mathcal D}_{\eta} = [-6,6]^3$. For each of these sharp null hypotheses, the missing potential outcomes in Table \ref{tab:obs_outcomes} are imputed using the methodology described earlier in this Section. Then, a randomization distribution for each factorial effect is obtained by computing the effect from each of 1000 random draws from the set of possible randomizations. Figure \ref{fig:histograms} shows the randomization distributions of $\bar{\tau}_{.1}$, and $\bar{\tau}_{.2}$ under one of the 100 sharp null hypotheses ${\bm \tau}_i = ( 4.20, -2.22,  0.81)^{\prime}$, with the observed estimated values $\widehat{\bar{\tau}}_{.1}$ and $\widehat{\bar{\tau}}_{.2}$ superimposed as vertical lines. The $p$-values for factorial effects $\bar{\tau}_{.1}$, and $\bar{\tau}_{.2}$ are obtained as 0.891 and 0.000 respectively.

\begin{figure}[ht]
\begin{center}
\psfig{figure=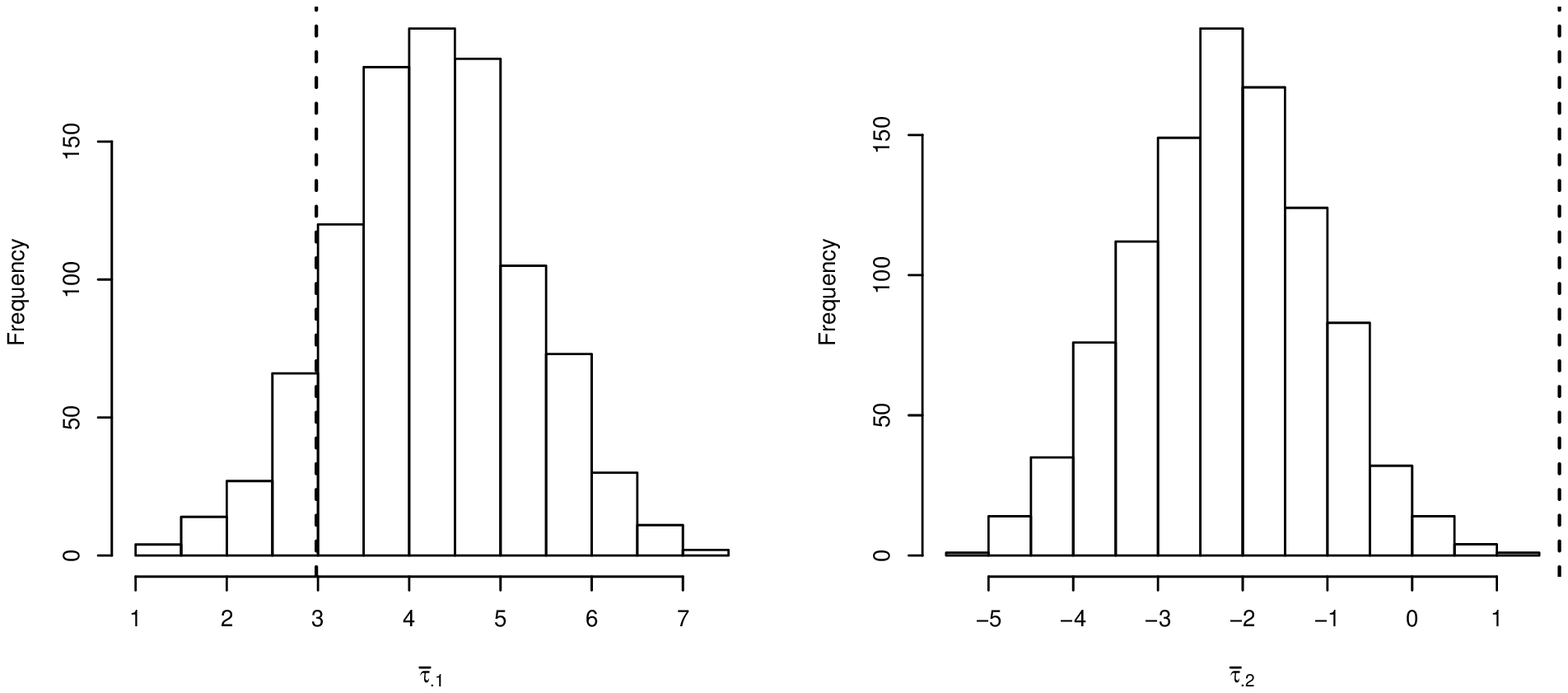,width=5.8in,height=3 in,angle=0}
\end{center}
\caption{Randomization distributions of $\bar{\tau}_{.1}$ and $\bar{\tau}_{.2}$ for ${\bm \eta} = (4.20, -2.22, 0.81)$}  \label{fig:histograms}
\end{figure}

\begin{figure}[htbp]
\begin{center}
\psfig{figure=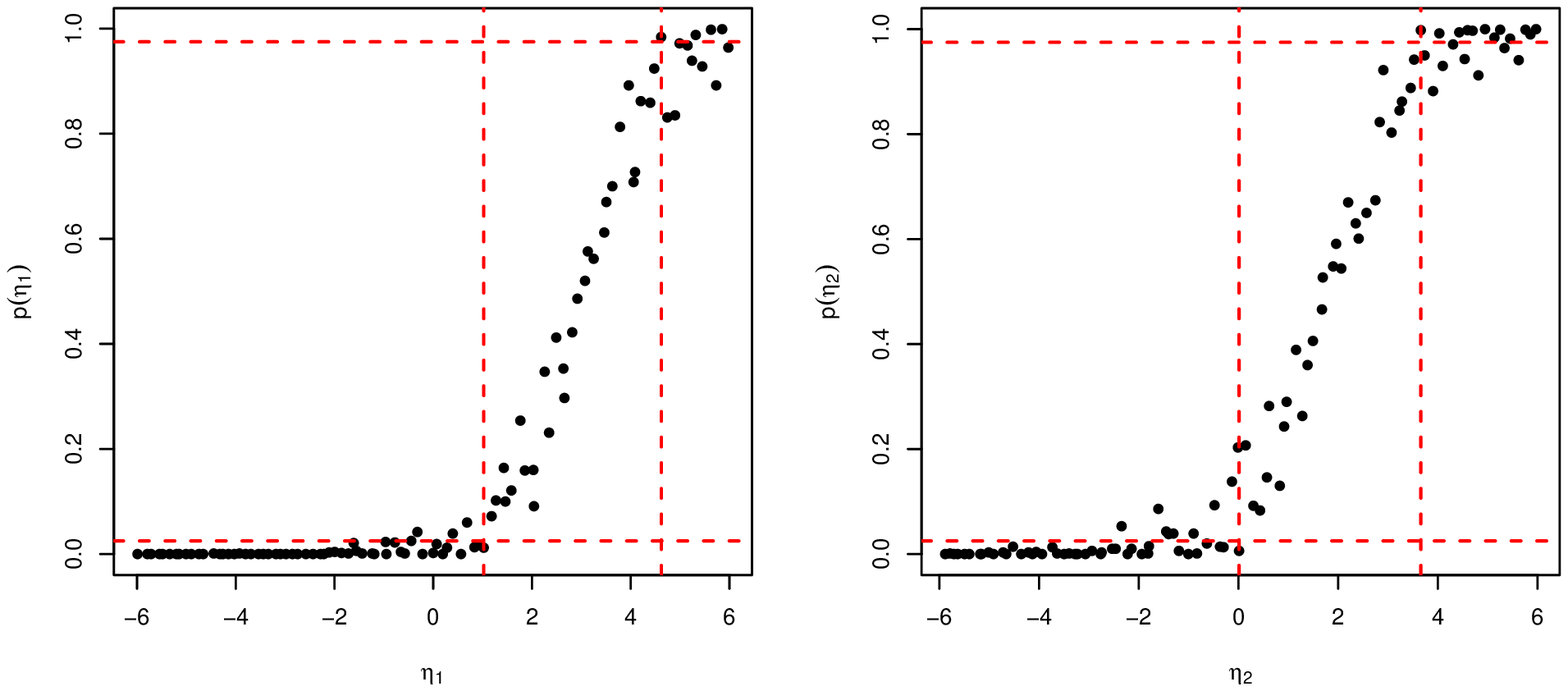,width=5.8in,height=3 in,angle=0}
\end{center}
\caption{Plot of $p(\eta_j)$ versus $\eta_j$ for $j = 1,2$.}  \label{fig:p-val}
\end{figure}

Proceeding in a similar manner, an array of 100 $p$-values are obtained for each of the three factorial effects. The plot of $p(\eta_j)$ versus $\eta_j$ are shown in Figure \ref{fig:p-val} for $j=1$ and $j=2$, i.e., for the two main effects. As described earlier and shown in the plots, the 95\% fiducial limits for the factorial effects are obtained as the maximum and minimum values of $\eta_j$ which are smaller than or equal to 0.025, and larger than or equal to 0.975 respectively.

The Neymanian confidence intervals are computed using (\ref{eq:ci_indiv}). The 95\% Fisherian and Neymanian confidence intervals for the three factorial effects are shown in Table \ref{tab:NeymanFishersummary}.

\small
\begin{table}[htbp]
\begin{center}
\caption{95\% Neymanian and Fisherian ``Fiducial' intervals for the three factorial effects}
\vspace{-0.1 in}
\begin{tabular}{ccccc}
\hline
Factorial effect     &   True value    & Point estimate               &  Neymanian 95\% interval     & Fisherian 95\% interval         \\ \hline
$\bar{\tau}_{.1}$    &    3            &   2.98                       & [1.93,4.03]          &  [1.02,4.61]           \\
$\bar{\tau}_{.2}$    &    2            &   1.74                       & [0.69,2.78]          &  [0.01, 3.65]         \\
$\bar{\tau}_{.3}$    &    0            &   0.36                       & [-0.69,1.40]         &  [-1.38, 1.91]         \\ \hline
\end{tabular}
\end{center} \label{tab:NeymanFishersummary}
\end{table}
\normalsize

We observe that the Fisherian intervals are wider than the Neymanian intervals. However, note that the Fisherian intervals depend on the number of sharp-null hypotheses chosen, and also on the number of randomizations used in each iteration. Comparisons of Fisherian and Neymanian intervals need further investigation and will be reported in future work. Rubin (1984) provided the following Bayesian justification of the Fisherian approach to inference: \emph{it gives the posterior predictive distribution of the estimand of interest under a model of constant treatment effects and fixed units with fixed responses}. Thus, a natural extension of the Fisherian approach is a Bayesian inference procedure described in the following section.

\section{A Bayesian framework} \label{sec:Bayesian}

The proposed Bayesian framework is based on Rubin (1978). The key idea is to develop an imputation model for the missing potential outcomes $\ym$, conditional on the observed outcomes $\yov$ and the observed assignment vector ${\bm W}$. Because the treatment effect is a function of $(\yov,\ym)$, this plan will lead us to obtain the conditional posterior distribution of the factorial effects $\tfs$. To this end, let $f({\bm Y}|\Theta)$ denote a suitable probabilistic model for ${\bm Y}_i$, where $\Theta$ is a vector of parameters endowed with a suitable prior distribution.
For quantities $a,b$ we use the notation $[a|b]$ to denote the probability density
function of $a$ given $b$.
Posterior inference for the factorial effects $\tfs$ entails the following steps:
\begin{enumerate}
\item Obtain $[\Theta|\yov, {\bm W}]$, the  posterior distribution for the parameters $\Theta$ conditional on the observed data $\yov$ and the observed assignment mechanism ${\bm W}$ using Bayes rule.
\item Obtain $[\ym| \yov, \Theta, {\bm W}]$, the conditional posterior distribution of $\ym$ given $\yov$, $\Theta$ and ${\bm W}$. The distribution for $[\ym|\yov,\Theta,{\bm W}]$ is the posterior predictive distribution corresponding to the original model $[{\bm Y}|\Theta]$ (see Rubin(1978)).
\item Next obtain the imputation model $[\ym| \yov, {\bm W}]$  by marginalizing
 the posterior predictive distribution in step 2 over the parameter $\Theta$.
\item Finally, obtain the posterior distribution for the factorial effects $\tfs$.
\end{enumerate}

We now apply the above methodology to balanced factorial designs using a simple hierarchical model with conjugate prior distributions. We will also illustrate some connections between Bayesian inference and Neymanian inference discussed earlier in Section
\ref{sec:neyman}.  We can incorporate the linear model framework introduced in Section
\ref{sec:neyman} into  the following hierarchichal model: % that assumes that each pair of potential outcomes have the same correlation $\rho$:
\be[{modelbayes}]
\left[{\bm Y}_i| {\bm \mu}, \sigma^2\right]  &\stackrel{i.i.d}{=}  \mathrm{N}_{2^K} ({\bm \mu}, \sigma^2 \Omega)\,, \quad \quad
 \Omega =  \left(
            \begin{array}{cccc}
              1 & \rho & \cdots & \rho \\
              \rho & 1 & \cdots & \rho \\
              \cdots & \cdots & \cdots & \cdots \\
              \rho & \rho & \cdots & 1 \\
            \end{array}
          \right).   \\
%{\bm Y}_i  &=  N_{2^k} ({\bm \tau}, \sigma^2 {\bm I}_{2^k}) \;, \\
\left[{\bm \mu}|\sigma^2\right] &= \mathrm{N}_{2^K} ({\bm \mu}_0, \sigma^2/r_0 \, I_{2^K})\;, \\
 \left[\sigma^2\right] &= \mathrm{IG}(\alpha, \beta) \;.
\ee
Thus the interclass correlation for the potential outcomes for each experimental unit is equal to $\rho$, which is considered known.
%The parameter $r_0$ may be intuitively thought of as the `sample size' corresponding to the information contained in the prior.
Here $\mathrm{IG}$ denotes the inverse gamma distribution \footnote{The density of a quantity $\phi$ distributed according to  $\mathrm{IG}(\alpha,\beta)$ is proportional to $\phi^{-\alpha -1} e^{-\phi/ \beta}$.}.
As discussed in Imbens and Rubin (2012),  for moderate sample sizes the analysis will not be sensitive to the choice of priors as long as we are not too dogmatic in eliciting the prior distribution.

The prior distribution  in the hierarchical model \eqref{modelbayes} is chosen to be conjugate and thus posterior
inference and sampling from the posterior predictive distribution is quite straightforward.
Recall $\yob(\bm x)$ from (\ref{eq: ybar}).
For $\bm z \in \mathbb{Z}$, define the quantities
\be
\mm(\bm z) &=  {r\yob(\bm z) + r_0 \bm \mu_0(\bm z) \over
{r + r_0}} \;.
\ee
Also define the vector $\mm_{2^K \times 1} = (\mm(\bm z))_{\bm z \in \mathbb{Z}}$. The posterior distribution of the model parameters $\bm \tau, \sigma^2$ are given by:
 \be [eqn:posttau]
\small[{\bm \mu}| \yov, \sigma^2, \bm W \small] &= \mathrm{N}({\mm}, {\sigma^2 \over r_0 + r} I_{2^K}) \,,\\
 \small[\sigma^2| \yov, \bm W \small] &= \mathrm{IG}\Big(\alpha + 2^{k-1}r \,, \beta + {(r-1) \over 2} \sum_{\bm z} s^2(\bm z) + {1 \over 2} \sum_{\bm z}  {r r_0 \over r + r_0} (\yob(\bm x)-\bm \mu_0(\bm z))^2 %\mm^2(\bm x)
 \Big) \;,
 \ee
 where $s^2(\bm z) = {1 \over (r-1)} \sum_{i: W_i(\bm z) =1 } (\yo_i(\bm z) - \yob(\bm z))^2$.

Note that (\ref{eqn:posttau}) represents step 1 of the four steps involved in making posterior inference of $\tfs$, as described earlier in this Section. After working through steps 2-4 (refer to the proof of Theorem \ref{thm:tmeanvar} in the Appendix), we arrive at the following result on the posterior distribution of the factorial effects:

\vspace{0.1 in}

\begin{thm} \label{thm:tmeanvar}
Assume that the potential outcomes follow the model (\ref{modelbayes}) and fix a factorial effect $\tfs$. Then, for a completely randomized assignment mechanism,
%the posterior distribution of the factorial effect is given by
\be
\E(\tfs | \yo, \bm W) &=  m_j, \label{eqn:tmean}\\
\var(\tfs| \yo, \bm W) &= {V \over 2^{2(K-1)}}\Big({1 \over N} K(\rho)  + {2^{K} \over r + r_0} (1 - {1-\rho \over 2^K})^2
\Big)
%{V \over 2^{2(k-1)}}\Big[ (2^k - 1) (1 + {1 \over r + r_0}) + \rho(2^{k-1} - 1)(2^{k-1}-2) \Big] \;. \label{eqn:tvar}
\ee
where
\be
m_j = \big(1 -{1-\rho \over 2^K} \big) {1 \over 2^{K-1}}\bm g_j' \bm m + {1 - \rho \over 2^K}\,  \hattfs \;,
%{1 \over 2^{K-1}}\Big[\big(1 -{1-\rho \over 2^K} \big)\Big( \sum_{\bm z \in \Zplus_j} \bm m(\bm z) - \sum_{{\bm z} \in \Zminus_j} \bm m(\bm z) \Big) + {1 - \rho \over 2^K}\Big( \sum_{{\bm z} \in \Zplus_j}  \yob(\bm z) - \sum_{{\bm z} \in \Zminus_j} \yob(\bm z) \Big) \Big] \;.
\ee
$V$ denotes the posterior expectation of $\sigma^2$, given by
\be
V = \E(\sigma^2| \yov, \bm W ) = {\beta + {(r-1) \over 2} \sum_{\bm z} s^2(\bm z) + {1 \over 2} \sum_{\bm z}  {r r_0 \over r + r_0} \mm^2(\bm z)  \over  \alpha + 2^{K-1}r  -1} \;,
\ee
and $k(\rho) = \Big((1-\rho^2)(2^K - 1) - 2\rho(1-\rho)(2^{K-1}-1) \Big) $.
 \end{thm}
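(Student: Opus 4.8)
The plan is to obtain the posterior law of the finite-population estimand $\tfs$ by conditioning on $({\bm \mu},\sigma^2)$, computing the conditional moments of $\tfs$, and then integrating against the parameter posterior (\ref{eqn:posttau}); the key simplification is that, given $({\bm \mu},\sigma^2)$ and the observed data, the missing vectors $\ym_i$ are independent across units with an explicit Gaussian law. Writing $\tfs = 2^{-(K-1)}N^{-1}\sum_{i=1}^N {\bm g}_j'{\bm Y}_i$ and splitting, for the unit assigned to ${\bm z}$, ${\bm g}_j'{\bm Y}_i = g_j({\bm z})\yo_i + {\bm g}_j^{\mathrm{mis}\prime}\ym_i$ into the (fixed) observed coordinate and the missing block, all randomness in $\tfs$ enters through $\ym_i$. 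Since $\Omega$ is compound symmetric, the multivariate-normal conditioning formula gives $\E(\ym_i\mid\yo_i,{\bm\mu},\sigma^2) = {\bm\mu}^{\mathrm{mis}} + \rho{\bm 1}(\yo_i - \mu({\bm z}))$ and $\var(\ym_i\mid\yo_i,\sigma^2) = \sigma^2(\Omega^{\mathrm{mis}} - \rho^2{\bm 1}{\bm 1}')$, and by exchangeability these do not depend on which coordinate is observed. The identity that makes everything collapse is ${\bm g}_j'{\bm 1} = 0$, which forces ${\bm g}_j^{\mathrm{mis}\prime}{\bm 1} = -g_j({\bm z})$.

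For the mean in (\ref{eqn:tmean}) I would insert the conditional expectation of $\ym_i$ and use the zero-sum identity to reduce the per-unit mean to $\E({\bm g}_j'{\bm Y}_i\mid\yo_i,{\bm\mu},\sigma^2) = (1-\rho)g_j({\bm z})(\yo_i - \mu({\bm z})) + {\bm g}_j'{\bm \mu}$. Averaging over units and grouping by treatment using balance ($\sum_{i:\,{\bm z}_i={\bm z}}\yo_i = r\yob({\bm z})$ and ${\bm g}_j'\yobn = 2^{K-1}\hattfs$) yields $\E(\tfs\mid{\bm\mu},\sigma^2,\yo,{\bm W}) = \frac{1-\rho}{2^K}\hattfs + (1-\frac{1-\rho}{2^K})2^{-(K-1)}{\bm g}_j'{\bm \mu}$, which is free of $\sigma^2$ and linear in ${\bm\mu}$. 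Taking posterior expectation and replacing ${\bm\mu}$ by its posterior mean $\mm$ from (\ref{eqn:posttau}) delivers $m_j$.

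For the variance I would apply the law of total variance over $({\bm\mu},\sigma^2)$. The within-parameter term $\E[\var(\tfs\mid{\bm\mu},\sigma^2,\yo,{\bm W})]$ exploits conditional independence across units: each unit contributes $2^{-2(K-1)}N^{-2}{\bm g}_j^{\mathrm{mis}\prime}\var(\ym_i\mid\yo_i){\bm g}_j^{\mathrm{mis}}$, and with ${\bm g}_j^{\mathrm{mis}\prime}{\bm g}_j^{\mathrm{mis}} = 2^K-1$ and $({\bm g}_j^{\mathrm{mis}\prime}{\bm 1})^2 = 1$ this equals $\sigma^2(1-\rho)(2^K-1+\rho)$ for every unit, so summing and replacing $\sigma^2$ by $V$ gives $2^{-2(K-1)}V N^{-1}(1-\rho)(2^K-1+\rho)$. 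The between-parameter term is, since the conditional mean is linear in ${\bm\mu}$ alone, $(1-\frac{1-\rho}{2^K})^2 2^{-2(K-1)}\var({\bm g}_j'{\bm\mu}\mid\yo,{\bm W})$; a further total-variance step over $\sigma^2$ together with $[{\bm\mu}\mid\yov,\sigma^2,{\bm W}] = \mathrm{N}(\mm,\frac{\sigma^2}{r_0+r}I)$ and ${\bm g}_j'{\bm g}_j = 2^K$ gives $\var({\bm g}_j'{\bm\mu}\mid\yo,{\bm W}) = 2^K V/(r+r_0)$. Adding the two terms reproduces the claimed variance after checking the elementary identity $(1-\rho)(2^K-1+\rho) = k(\rho)$ by factoring out $(1-\rho)$.

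The bulk of the work is organizational rather than deep: partitioning ${\bm g}_j$ and $\Omega$ into observed and missing blocks and verifying that the per-unit conditional moments are invariant to which coordinate is observed, which follows from exchangeability of compound symmetry and the zero-sum property of ${\bm g}_j$. The one point requiring care is the nested use of the law of total variance — first over $({\bm\mu},\sigma^2)$ jointly, then over $\sigma^2$ inside the between-parameter term — where I must confirm no cross term survives; this is clean because the conditional mean depends on ${\bm\mu}$ only and the conditional variance on $\sigma^2$ only. The concluding identity $(1-\rho)(2^K-1+\rho) = k(\rho)$ is the final routine check that matches my derivation to the stated $k(\rho)$.
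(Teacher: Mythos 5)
Your proposal is correct and follows essentially the same route as the paper's proof: both exploit the Gaussian conditional law of the missing potential outcomes under compound symmetry, the conditional independence of units given $({\bm \mu},\sigma^2)$, and iterated expectation/variance over the conjugate posterior \eqref{eqn:posttau}, with your factored form $(1-\rho)(2^K-1+\rho)$ agreeing with the paper's expanded $k(\rho)$. The only cosmetic difference is the order of conditioning in the variance step (you decompose over $({\bm\mu},\sigma^2)$ jointly and then over $\sigma^2$ inside the between term, while the paper conditions on $\sigma^2$ first and notes the conditional mean is free of $\sigma^2$), which yields the same two-term expression.
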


 Although the expressions for the posterior expectation and variance of $\tfs$ derived in Theorem \ref{thm:tmeanvar} appears somewhat complicated, they reduce to simple forms under specific conditions. This is clear from the following corollary, which also establishes the relationship between the Bayesian and Neymanian estimators of $\tfs$.

\begin{cor} \label{eqn:cormv}
For the model (\ref{modelbayes}),
\be
\lim_{r_0 \rightarrow 0} \mathbb{E}(\tfs|{\bm Y}^{obs}, {\bm W}) &= \hattfs \,,\\
\lim_{\alpha \rightarrow 1, \beta \rightarrow 0}\lim_{r_0 \rightarrow 0} \var(\tfs|{\bm Y}^{obs}, {\bm W})  &=
{4V_\ell \over N} (1 - {1 - \rho \over 2^K}) %= \widehat{\var}(\theta^{\fp}) \,
% \frac{4}{N} s^2, $$ where $s^2 = 2^{-k}\sum_{{\bm x}} s^2({\bm x})$ and $s^2({\bm x})$ is given by (\ref{eq:sampvariance}).
\ee
where $\hattfs$ is the Neymannian estimator for the mean, defined by (\ref{eq:defpopfactorialest}), and $V_\ell = {(r-1) \over r} {1 \over  2^k} {\sum_{\bm x} s^2(\bm x) }$ .
\end{cor}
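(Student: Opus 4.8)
The plan is to read both identities off Theorem~\ref{thm:tmeanvar} by passing to the indicated limits. The posterior mean collapses immediately, while the posterior variance reduces to the claimed form after using $N = r2^K$ and verifying a single algebraic identity in $\rho$ and $K$.

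For the first limit I begin with $\E(\tfs|\yo,\bm W) = m_j$, where $m_j = \big(1 - \tfrac{1-\rho}{2^K}\big)\tfrac{1}{2^{K-1}}\bm g_j'\bm m + \tfrac{1-\rho}{2^K}\hattfs$. As $r_0\to 0$, each coordinate $\mm(\bm z) = \tfrac{r\yob(\bm z)+r_0\bm\mu_0(\bm z)}{r+r_0}$ tends to $\yob(\bm z)$, so $\bm m \to \yobn$ and hence $\tfrac{1}{2^{K-1}}\bm g_j'\bm m \to \tfrac{1}{2^{K-1}}\bm g_j'\yobn = \hattfs$ by (\ref{eq:defpopfactorialest}). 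Since the weights $\big(1 - \tfrac{1-\rho}{2^K}\big)$ and $\tfrac{1-\rho}{2^K}$ sum to one, it follows that $m_j \to \hattfs$, which is the first claim. This step is purely a matter of passing to the limit in a ratio.

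For the variance I take the two limits in the prescribed order, starting from $\var(\tfs|\yo,\bm W) = \tfrac{V}{2^{2(K-1)}}\big(\tfrac{1}{N}K(\rho) + \tfrac{2^K}{r+r_0}(1-\tfrac{1-\rho}{2^K})^2\big)$. Sending $r_0\to 0$ annihilates the term $\tfrac{rr_0}{r+r_0}\mm^2(\bm z)$ in the numerator of $V$ and replaces $\tfrac{2^K}{r+r_0}$ by $\tfrac{2^K}{r}$, so $V \to \tfrac{\beta + \frac{r-1}{2}\sum_{\bm z}s^2(\bm z)}{\alpha + 2^{K-1}r - 1}$. Sending then $\alpha\to 1,\ \beta\to 0$ simplifies the numerator and the denominator, leaving $V \to \tfrac{(r-1)\sum_{\bm z}s^2(\bm z)}{2^K r} = V_\ell$. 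Thus the variance converges to $\tfrac{V_\ell}{2^{2(K-1)}}\big(\tfrac{1}{N}K(\rho) + \tfrac{2^K}{r}(1-\tfrac{1-\rho}{2^K})^2\big)$.

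It remains to show this equals $\tfrac{4V_\ell}{N}(1-\tfrac{1-\rho}{2^K})$. Writing $u = 1 - \tfrac{1-\rho}{2^K}$ and using $N = r2^K$ (so $\tfrac{2^K}{r} = \tfrac{2^{2K}}{N}$) together with $2^{2(K-1)} = \tfrac{2^{2K}}{4}$, the claim is equivalent to $\tfrac{K(\rho)}{2^{2K}} = u - u^2 = u(1-u)$. Since $1-u = \tfrac{1-\rho}{2^K}$, the right-hand side is $2^{-2K}(1-\rho)(2^K - 1 + \rho)$; on the other hand, factoring $(1-\rho)$ out of $K(\rho) = (1-\rho^2)(2^K-1) - 2\rho(1-\rho)(2^{K-1}-1)$ gives $K(\rho) = (1-\rho)\big[(1+\rho)(2^K-1) - 2\rho(2^{K-1}-1)\big]$, and expanding the bracket reduces it to $2^K - 1 + \rho$. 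The two sides coincide, which finishes the variance limit. The computation is elementary throughout; the only delicate point is this final identity collapsing the bracket to $4u$, where one must deploy the explicit form of $K(\rho)$ from Theorem~\ref{thm:tmeanvar} and the relation $N = r2^K$ while keeping the powers of two straight. Everything else is limit-taking in ratios.
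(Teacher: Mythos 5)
Your proof is correct and takes essentially the same route as the paper: both read the two limits directly off Theorem \ref{thm:tmeanvar}, the mean collapsing because $\mm(\bm z) \to \yob(\bm z)$ as $r_0 \to 0$ and the two weights $\big(1-\tfrac{1-\rho}{2^K}\big)$ and $\tfrac{1-\rho}{2^K}$ sum to one. The only difference is that the paper compresses the variance part into ``straightforward algebra using $\lim V = V_\ell$,'' whereas you carry that algebra out explicitly, verifying the key identity $k(\rho) = (1-\rho)\,(2^K-1+\rho)$, equivalently $k(\rho)/2^{2K} = u(1-u)$ with $u = 1 - \tfrac{1-\rho}{2^K}$; this is a worthwhile filling-in of the paper's omitted details (note only that your parenthetical remark about the bracket collapsing to ``$4u$'' should read $2^K u$ --- the displayed computation itself is right).
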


\begin{remark}
The connection between the Bayesian and Neymanian estimators of $\tfs$, as seen from Corollary \ref{eqn:cormv}, is quite intuitive. Note that $r_0$ can be thought of as a ``prior sample size''. Thus, the condition $r_0 \rightarrow 0$ can be interpreted as lack of prior information on the mean vector of the potential outcomes, or alternatively, as a diffused prior for the mean vector ${\bm \mu}$. The condition $\alpha \rightarrow 1, \beta \rightarrow 0$ leads to a diffuse prior for the variance parameter $\sigma^2$. Collectively, these conditions reflect lack of prior information about the distribution of the potential outcomes.

It is also worthwhile to note that, in the expression for $V_{\ell}$, the multiplicative term ${r-1 \over r} \rightarrow 1$ as the number of replicates $r \rightarrow \infty$. Therefore, $V_{\ell} \rightarrow s^2$ as ${r-1 \over r} \rightarrow 1$, where $s^2 = 2^{-K}\sum_{{\bm x}} s^2({\bm x})$ is an unbiased estimator of the true common variance $S^2$ of all potential outcomes. Thus, as the number of replications $r \rightarrow \infty$, the posterior variance of the factorial effect $\tfs$, corresponding to the reference prior derived in Corollary \ref{eqn:cormv}, approaches the unbiased estimator $\widehat{Var}_{Ney}(\hattfs|{\bm Y})$ defined by (\ref{eq:sethetahat}). Finally, for any finite $r$,  $V_\ell$ is  smaller than its limit $s^2$, which shows that the variance estimate for the finite population estimand is smaller than the variance estimate for the super population estimand.
\end{remark}

\section{An example to demonstrate the advantages of the potential outcomes model in causal inference from $2^K$ factorial experiments}

We now demonstrate the effectiveness and versatility of the proposed framework with an example. Wu and Hamada (2009, Ch. 4) describes a $2^3$ factorial experiment that was originally reported by Box and Bisgaard (1987) and conducted with the objective of reducing the percentage of cracked springs produced by a particular manufacturing process. Three factors -- temperature of the quenching oil, carbon content (percentage) of the steel and the temperature of the steel before quenching -- were investigated. The response for each experimental unit (which represents a spring manufactured under a specific treatment combination ${\bm z}$) is a binary random variable taking the value 1 if the spring does not crack, and zero otherwise.

The data reported from the experiment (see Wu and Hamada 2009, Ch. 4), is actually a summary of the actual experimental data, and reports only the observed percentage of cracked springs for each of the 8 treatment combinations. Here, we assume that $r=100$ springs were manufactured under each treatment combination in a completely randomized sequence (so that $N = 800$). Next, we simulate the potential outcomes for each of the 800 experimental units from a super-population model in which the potential outcome $Y_i({\bm z})$ is drawn as a binary random variable taking value 1 with probability $\pi({\bm z})$, where the probabilities, $\pi({\bm z})$, which are given in Table \ref{tab:springdat}, equal the observed proportion of non-cracked springs, reported in Wu and Hamada (2009, Ch. 4). Finally we generate observed outcomes $Y_i^{\mathrm {obs}}$ for each experimental unit using a completely randomized treatment assignment.

\begin{table}[htbp]
\small
\begin{center}
\begin{tabular}{c | c c c c c c c c|}
${\bm z}$ & (-1,-1,-1) & (-1,-1,1) & (-1,1,-1) & (-1,1,1) & (1,-1,-1) & (1,-1,-1) & (1,1,-1) & (1,1,1) \\ \hline
$\pi({\bm z})$ & 0.67 & 0.79 & 0.61 & 0.75 & 0.59 & 0.90 & 0.52& 0.87 \\
\end{tabular}
\end{center}
\caption{Binary probabilities for the simulation model} \label{tab:springdat}
\end{table}

The traditional method of estimating factorial effects (Wu and Hamada 2009, Ch. 14) from these simulated experimental data is to fit a logistic regression model of the observed response on columns of the $800 \times 8$ design matrix (in which 0's are replaced by -1's, and columns for the interactions are obtained as products of the corresponding factor levels). The factorial effects are estimated as twice the coefficients of the fitted logistic model and are reported in Table \ref{tab:logistic}.

\begin{table}[htbp]
\small
\begin{center}
\begin{tabular}{l | r r r r }
Factorial effect             &  Estimate & Std. Error & $z$-value & $Pr(>|z|)$  \\ \hline
Main effect of 1             &  0.1468   & 0.1753     & 0.84    & 0.4022        \\
Main effect of 2             & -0.2492   & 0.1753     & -1.42   & 0.1551        \\
Main effect of 3             &  1.2975   & 0.1753     &  7.40   & 0.0000        \\
Interaction between 1 and 2  &  0.0787   & 0.1753     & 0.45    & 0.6534        \\
Interaction between 1 and 3  &  0.5426   & 0.1753     & 3.10    & 0.0020       \\
Interaction between 2 and 3  & -0.1060   & 0.1753     & -0.61   & 0.5452         \\
Three-way interaction        & 0.1783    & 0.1753     & 1.02    & 0.3092        \\ \hline
\end{tabular}
\end{center}
\caption{Results from a logistic regression model} \label{tab:logistic}
\end{table}

A natural question is, what factorial effects are the quantities in the second column of Table \ref{tab:logistic} estimating? Let ${\bm \pi}$ denote the vector of $\pi({\bm z})$'s given by the second row of Table \ref{tab:springdat}, and ${\bm \pi}_L$ denote the vector of logistic transforms of $\pi({\bm z})$'s. Then, twice the coefficients of the logistic regression in the second column of Table \ref{tab:logistic} are unbiased estimates of superpopulation-level factorial effects $(1/4){\bm g}_j^{\prime} {\bm \pi}_L $, $j = 1, \ldots, 7$. However, instead of these estimands, the engineers may be interested in superpopulation-level factorial effects $\tss = (1/4){\bm g}_j^{\prime} {\bm \pi}$, i.e., contrasts of $\pi({\bm z})$, rather than their logistic transformations. Point estimates of ${\bm g}_j^{\prime} {\bm \pi}$ can be obtained as transformations of the estimated coefficients of the logistic regression. A more straightforward way of obtaining point estimates of $(1/4){\bm g}_j^{\prime} {\bm \pi}$ is to substitute $p({\bm z}) = \bar{Y}_i^{obs}({\bm z})$ for $\pi({\bm z})$. Asymptotic standard errors of these estimators, given by
$$ \frac{1}{4} \sqrt{\sum_{\bm z} \frac{\pi({\bm z})(1-\pi({\bm z}))}{r}}$$
can again be estimated by substituting $p({\bm z})$ for $\pi({\bm z})$. In the simulated data set, we have the estimated standard error as 0.0304 for each estimated factorial effect, which can be used to construct confidence intervals for the estimands.

Finally, an experimenter may also be interested in finite population-level factorial effects $\tfs = {\bm g}_j^{\prime} {\bm P}$, $j=1, \ldots, 7$, which are contrasts of $P({\bm z}) = \sum_{i=1}^{800} Y_i({\bm z})/800$. Whereas such estimands may not be of much interest in an industrial or a manufacturing scenario, they can be meaningful if we think of a similar experiment in a different setting, e.g., with 800 schools in New York as experimental units, and the three factors as new interventions being investigated by the Department of Education (DOE). Note that the GLM framework and the foregoing asymptotic arguments do not permit us to perform inference for finite population estimands. The Bayesian framework explained in Section \ref{sec:Bayesian}, however, permits us to perform inference for finite population estimands and superpopulation estimands separately.

To illustrate such an analysis, we assume, for the sake of simplicity, that we are interested only in two factorial effects -- the main effect of factor 3, represented by vector ${\bm g}_3$, and the interaction between factors 1 and 3, represented by vector ${\bm g}_5$ -- that appear to be significant from the earlier analysis. We now postulate a simple Hierarchical model. First, assume that all the potential outcomes for the $i$th, $i = 1, \ldots, N$ unit are independently distributed as binary random variables so that their joint distribution is given by
$$f \big({\bm Y}_i|\pi({\bm z}),{{\bm z} \in {\mathbb Z}} \big) = \prod_{{\bm z} \in {\mathcal Z}} \pi({\bm z})^{Y_i({\bm z})} (1-\pi({\bm z}))^{1-Y_i({\bm z})}.$$

\noindent Second, assume that given $\pi({\bm z})$ for all ${{\bm z} \in {\mathbb Z}}$, the potential outcomes of all the $N$ units are independently distributed, so that the joint distribution of the vector of the $Nr$ potential outcomes is given by
$$f \big({\bm Y}|\pi({\bm z}),{{\bm z} \in {\mathbb Z}}\big) = \prod_{1=1}^N f({\bm Y}_i|\pi({\bm z}),{{\bm z} \in {\mathbb Z}}).$$

\noindent Third, assume that $${\bm \pi}_L = \alpha {\bm g}_0 + \beta {\bm g}_3 + \gamma {\bm g}_5,$$
where $(\alpha, \beta, \gamma) \sim  N ({\bm \mu}, {\bm \Sigma})$, and $\mu$ and ${\bm \Sigma}$ are assumed known. For demonstration purposes, we set ${\bm \mu} = (1.048,0.6488,0.2713)$, where the first element 1.048 represents the intercept term of the logistic regression, and the other two are the coefficients of covariate vectors ${\bm g}_3$ and ${\bm g}_5$ in the logistic regression (or half of the estimated factorial effects shown in Table \ref{tab:springdat}). Also, we choose
\[{\bm \Sigma} =  \left(
   \begin{array}{ccc}
     4 & 10 & 10 \\
     10 & 100 & 50 \\
     10 & 50 & 100 \\
   \end{array}
 \right)
\]

\noindent Finally, we note that, because the design is completely randomized, the distribution of ${\bm W}$ is independent of ${\bm Y}$ and ${\bm \pi}$.

To make super population-level inference, we simply need to sample from the posterior distribution
\begin{eqnarray*}
f(\alpha, \beta, \gamma | {\bm Y}^{\mathrm{obs}}, {\bm W}) &\propto& f({\bm Y}^{\mathrm{obs}}| {\bm W}, \alpha, \beta, \gamma) f({\bm W}| \alpha, \beta, \gamma ) f(\alpha, \beta, \gamma) \\ &\propto& f(\alpha, \beta, \gamma) \prod_{{\bm z}} \pi({\bm z})^{\sum_{i:W_i({\bm z})=1}Y_i({\bm z})} (1-\pi({\bm z}))^{r-\sum_{i:W_i({\bm z})=1}Y_i({\bm z})}.
\end{eqnarray*}
The superpopulation level estimands of interest, $\bar{\tau}_{.3}^{\mathrm{SP}}$ and $\bar{\tau}_{.5}^{\mathrm{SP}}$, can be expressed as functions of $\alpha, \beta$ and $\gamma$. After obtaining 10,000 draws from $f(\alpha, \beta, \gamma | {\bm Y}^{\mathrm{obs}}, {\bm W})$ (each draw results in a value of the superpopulation estimands) we obtain point estimates of $\bar{\tau}_{.3}^{\mathrm{SP}}$ and $\bar{\tau}_{.5}^{\mathrm{SP}}$ as their posterior means, and interval estimates as 95\% credible intervals.

To make finite population-level inference, we obtain the imputation model
\begin{eqnarray*}
f({\bm Y}^{\mathrm{mis}} | {\bm Y}^{\mathrm{obs}}, {\bm W}) = \int f({\bm Y}^{\mathrm{mis}} | {\bm Y}^{\mathrm{obs}}, \alpha, \beta, \gamma, {\bm W}) f(\alpha, \beta, \gamma | {\bm Y}^{\mathrm{obs}}, {\bm W}) d\alpha \;\ d\beta \;\ d\gamma,
\end{eqnarray*}
where the finite-population estimands of interest $\bar{\tau}_{.3}$ and $\bar{\tau}_{.5}$ can be expressed as functions of $({\bm Y}^{\mathrm{mis}}, {\bm Y}^{\mathrm{obs}})$. For each draw  of $\alpha, \beta$ and $\gamma$, we obtain ${\bm Y}^{\mathrm{mis}}$, and consequently obtain point and interval estimates of $\bar{\tau}_{.3}$ and $\bar{\tau}_{.5}$. The results are summarized in Table \ref{tab:bayes_results}.

\begin{table}[htbp]
\begin{center}
\begin{tabular}{c|cc||c|cc}
\multicolumn{3}{c||}{Inference for Super Population} & \multicolumn{3}{c}{Inference for Finite Population} \\ \hline
Estimand & \multicolumn{2}{c||}    {Inference using} &  Estimand & \multicolumn{2}{c}    {Inference using} \\
         &   Regression          & Bayesian          &           &     Regression         &   Bayesian      \\ \hline
$\bar{\tau}_{.3}^{\mathrm{SP}}$   &    0.24 [0.16, 0.32]  &  0.24 [0.16, 0.31]  & $\bar{\tau}_{.3}$ & 0.24 [0.16, 0.32] &   0.24 [0.18, 0.30] \\
$\bar{\tau}_{.5}^{\mathrm{SP}}$  &    0.10 [0.02, 0.17]  &  0.10 [0.01, 0.15]  & $\bar{\tau}_{.5}$ & 0.10 [0.02, 0.17] & 0.10 [0.05, 0.13] \\
\end{tabular}
\end{center}
\caption{Summary of inferences of factorial effects 3 and 13 for the super population and finite population} \label{tab:bayes_results}
\end{table}

The analysis described in this Section and the results in Table \ref{tab:bayes_results} reinforce the advantages of the proposed framework over the linear or generalized linear model based inference for factorial effects:

\begin{enumerate}
\item It provides a clear understanding of the estimand.
\item It allows for inference from finite populations, in addition to inference for super population. As seen from Table \ref{tab:bayes_results}, the finite population credible intervals are tighter than the super population credible intervals.
\item It permits making inference for estimands other than the mean of individual potential outcomes. In this example, suppose the experimenter is interested in the median of the unit-level factorial effects $\tau_{ij}$. Because this estimand can be written as a function of $({\bm Y}^{\mathrm{mis}}, {\bm Y}^{\mathrm{obs}})$, making such inference is straightforward.
\end{enumerate}

\section{Concluding Remarks}

In this article we have proposed a framework for causal inference from $2^K$ factorial design using the concept of potential outcomes. Factorial effects are defined for a finite population, and the definitions are extended to estimands for a super population. We have discussed procedures for inferring these estimands under the Neymanian, Fisherian and Bayesian frameworks.  Through these discussions and several examples, we have demonstrated how the utilization of potential outcomes results in better understanding of the estimands and allows greater flexibility in statistical inference of factorial effects, compared to the commonly used linear model based approach.

Other than the benefits already demonstrated in this article, the proposed framework can provide a foundation for addressing complex problems associated with the design and analysis of social, behavioral and medical experiments. Examples of these problems are: (i) unbalanced factorial experiments, (ii) factorial experiments with covariates, (ii) factorial experiments with randomization restrictions, (iii) semi-observational studies with a factorial structure (e.g., experimental units self-select the levels of one or more factors), and (iv) matched sampling in observational studies with a factorial structure. We are therefore hopeful that this article will open up a large number of research problems motivated by complex multi-factor experiments in social, medical and behavioral sciences and help experimenters in making more meaningful causal inference from such experiments.

\begin{center}
{\large\bf Acknowledgments}
\end{center}

This research was supported by the U.S. National Science Foundation grant DMS-1107004, DMS-1107070.
We thank various colleagues and students for useful comments.

\section*{Appendix}
\begin{proof}[Proof of Theorem \ref{thm: mean}]
We argued that $\bar{Y}^{\mathrm{obs}}({\bm z})$ defined by (\ref{eq: ybar}) is an unbiased estimator of $\bar{Y}({\bm z})$. Therefore, the vector $\bar{\bm Y}^{\mathrm{obs}}$ in (\ref{eq:defpopfactorialest}) is an unbiased estimator of $\bar{\bm Y}$ defined by (\ref{eq:ybar(x)}). Comparing (\ref{eq:defpopfactorialest}) and (\ref{eq:altdefpopfactorial}), it immediately follows that $E(\hattfs) = \tfs$.
\end{proof}
\begin{proof}[Proof of Theorem \ref{thm:variance}]

Let $g_{jl}$ denote the $l$th element of vector ${\bm g}_j$, $j = 0, \ldots, J-1$, $l = 1, \ldots, J$, and let $z_l$ denote the $l$th treatment combination, $l = 1, \ldots, J$. From (\ref{eq:defpopfactorialest}), we have
\begin{eqnarray}
\var({\hattfs}) &=& 2^{-2(K-1)} {\bm g}_j^{\prime} \ var(\bar{\bm Y}^{\mathrm{obs}}) \ {\bm g}_j \nonumber \\
&=& 2^{-2(K-1)} \bigg \{\sum_{l=1}^J \ g_{jl}^2 \ var(\bar{Y}^{\mathrm{obs}}(z_l)) + \mathop{\sum \sum}_{l\neq l^{\prime}} \ g_{jl} \ g_{j l^{\prime}} \  cov(\bar{Y}^{\mathrm{obs}}(z_l), \bar{Y}^{\mathrm{obs}}(z_{l^{\prime}}))\bigg \} \nonumber \\
&=& 2^{-2(K-1)} \bigg \{\sum_{l=1}^J \ g_{jl}^2 \frac{N-r}{rN} S^2(z_l) - \frac{1}{N} \mathop{\sum \sum}_{l\neq l^{\prime}} \ g_{jl} \ g_{j l^{\prime}} \  S^2(z_l, z_{l^{\prime}})\bigg \}, \label{eq:thm2_pf1}
\end{eqnarray}
the last step following from (\ref{eq:varybar}) and (\ref{eq:covybar}).
Now, from (\ref{eq:vartheta}), we have
\begin{eqnarray*}
S_j^2 &=& \frac{1}{N-1} 2^{-2(K-1)} \sum_{i=1}^N ({\bm g}_j^{\prime} {\bm Y}_i - {\bm g}_j^{\prime} \bar {\bm Y})^2 \\
&=& \frac{1}{N-1} 2^{-2(K-1)} \sum_{i=1}^N \bigg \{ \sum_{l=1}^J g_{jl} \big( Y_i(z_l) - \bar{Y}(z_l) \big) \bigg \}^2 \\
&=& 2^{-2(K-1)} \bigg \{\sum_{l=1}^J \ g_{jl}^2 S^2(z_l) + \mathop{\sum \sum}_{l\neq l^{\prime}} \ g_{jl} \ g_{j l^{\prime}} \ S^2(z_l, z_{l^{\prime}}) \bigg \}.
\end{eqnarray*}

Consequently,
\begin{equation}
\mathop{\sum \sum}_{l\neq l^{\prime}} \ g_{jl} \ g_{j l^{\prime}} \ S^2(z_l, z_{l^{\prime}}) = 2^{2(K-1)} S_j^2 - \sum_{l=1}^J \ g_{jl}^2 S^2(z_l), \label{eq:thm2_pf2}
\end{equation}

Substituting (\ref{eq:thm2_pf2}) into (\ref{eq:thm2_pf1}), we get
\begin{eqnarray*}
\var({\hattfs}) &=& 2^{-2(K-1)} \bigg \{\sum_{l=1}^J \ g_{jl}^2 \Big( \frac{N-r}{rN} + \frac{1}{N} \Big) \ S^2(z_l) - \frac{1}{N} 2^{2(K-1)} S_j^2 \bigg \} \\
&=& \frac{1}{2^{2(K-1)}r} \bigg \{\sum_{\bm z} S^2 ({\bm z}) - \frac{1}{N} S_j^2  \bigg \},
\end{eqnarray*}
the last step following from the fact that $g_{jl}^2 = 1$ for all $j,l$.
\end{proof}
\begin{proof}[Proof of Theorem \ref{thm:covariance}]
From (\ref{eq:defpopfactorialest}), we have
\begin{eqnarray}
\cov(\hattfs,\hattfss ) &=& 2^{-2(K-1)} {\bm g}_j^{\prime} \ var(\bar{\bm Y}^{\mathrm{obs}}) \ {\bm g}_{j^{\prime}} \nonumber \\
&=& 2^{-2(K-1)} \bigg \{ \sum_{l=1}^J \ g_{jl} g_{j^{\prime}l} \ var(\bar{Y}^{\mathrm{obs}}(z_l))+ \mathop{\sum \sum}_{l\neq l^{\prime}} \ g_{jl} \ g_{j^{\prime} l^{\prime}} \  cov(\bar{Y}^{\mathrm{obs}}(z_l), \bar{Y}^{\mathrm{obs}}(z_{l^{\prime}}))\bigg \} \nonumber \\
&=& 2^{-2(K-1)} \bigg \{\sum_{l=1}^J \ g_{jl} g_{j^{\prime}l} \frac{N-r}{rN} S^2(z_l) - \frac{1}{N} \mathop{\sum \sum}_{l\neq l^{\prime}} \ g_{jl} \ g_{j^{\prime} l^{\prime}} \  S^2(z_l, z_{l^{\prime}})\bigg \}, \label{eq:thm3_pf1}
\end{eqnarray}
the last step following from (\ref{eq:varybar}) and (\ref{eq:covybar}). Now, from (\ref{covartheta}), we have
\begin{eqnarray*}
S_{j j^{\prime}}^2 &=& \frac{1}{N-1} 2^{-2(K-1)} \sum_{i=1}^N ({\bm g}_j^{\prime} {\bm Y}_i - {\bm g}_j^{\prime} \bar {\bm Y})({\bm g}_{j^{\prime}}^{\prime} {\bm Y}_i - {\bm g}_{j^{\prime}}^{\prime} \bar {\bm Y}) \\
&=& \frac{1}{N-1} 2^{-2(K-1)} \sum_{i=1}^N \bigg \{ \sum_{l=1}^J g_{jl} \big( Y_i(z_l) - \bar{Y}(z_l) \big) \bigg \} \bigg \{ \sum_{l=1}^J g_{{j^{\prime}}l} \big( Y_i(z_l) - \bar{Y}(z_l) \big) \bigg \} \\
&=& 2^{-2(K-1)} \bigg \{\sum_{l=1}^J \ g_{jl} g_{j^{\prime} l}  S^2(z_l) + \mathop{\sum \sum}_{l\neq l^{\prime}} \ g_{jl} \ g_{j^{\prime} l^{\prime}} \ S^2(z_l, z_{l^{\prime}}) \bigg \}.
\end{eqnarray*}

Consequently,
\begin{equation}
\mathop{\sum \sum}_{l\neq l^{\prime}} \ g_{jl} \ g_{j^{\prime} l^{\prime}} \ S^2(z_l, z_{l^{\prime}}) = 2^{2(K-1)} S_{j j^{\prime}}^2 - \sum_{l=1}^J \ g_{jl} g_{j^{\prime} l} S^2(z_l), \label{eq:thm3_pf2}
\end{equation}

Substituting (\ref{eq:thm3_pf2}) into (\ref{eq:thm3_pf1}), we get
\begin{eqnarray*}
\var({\hattfs}) &=& 2^{-2(K-1)} \bigg \{\sum_{l=1}^J \ g_{jl} g_{j^{\prime}}  \Big( \frac{N-r}{rN} + \frac{1}{N} \Big) \ S^2(z_l) - \frac{1}{N} 2^{2(K-1)} S_{j j^{\prime}}^2 \bigg \} \\
&=& \frac{1}{2^{2(K-1)}r} \Big [ \sum_{{\bm z} \in \Zminus_j \bigcap \Zminus_{j^{\prime}} } S^2({\bm z}) - \sum_{{\bm z} \in \Zminus_j \bigcap \Zplus_{j^{\prime}} } S^2({\bm z}) - \sum_{{\bm z} \in \Zplus_j \bigcap \Zminus_{j^{\prime}} } S^2({\bm z}) + \sum_{{\bm z} \in \Zplus_j \bigcap \Zplus_{j^{\prime}} } S^2({\bm z}) \Big] - \frac{1}{N}S_{j j^{\prime}}^2
\end{eqnarray*}
and the proof is finished.
\end{proof}

\begin{proof}[Proof of Equation (\ref{eq:s^2thetanon1})] Each of the sets ${\mathbb Z}_j^{-}$ and ${\mathbb Z}_j^{+}$ have cardinality $2^{k-1}$. Thus $2^{k-1} \choose 2$ pairs of distinct treatment combinations can be formed within each of these sets. Because the total number of pairs of distinct combinations is $2^k \choose 2$, the number of pairs in the set ${\mathbb Z}_j^{-} \bigcap {\mathbb Z}_j^{+}$ is ${2^k \choose 2} - 2 {2^{k-1} \choose 2}$. Consequently, from (\ref{eq:s^2theta}), we have
\begin{eqnarray*}
S^2(\hattfs) = \frac{1}{2^{2(k-1)}} \bigg[ {2^k \choose 2} + 2 \rho \bigg \{ 4{2^{k-1} \choose 2} - {2^k \choose 2} \bigg \} \bigg] S^2,
\end{eqnarray*}
and the result follows after simple algebraic manipulations.
\end{proof}

\begin{proof}[Proof of Theorem \ref{thm:tmeanvar}]

Let $\bm z(i)$ denote the treatment combination the $i^{\mathrm{th}}$ observation is assigned to. Then $\ym_i| \yo_i$ is a multivariate Gaussian. For any $\bm z, \bm z^* \neq \bm z(i)$,
\be \label{eqn:fullcond}
\small[\ym_i(\bm z) | \yo_i, \bm \mu, \sigma^2, \bm W \small]  &= \mathrm{N}(\bm \mu(\bm z) + \rho(\yo_i - \bm \mu(\bm z(i))), \sigma^2(1- \rho^2)) \;, \\
\cov(\ym_i(\bm z),\ym_i(\bm z^*)| \yo_i, \bm \mu, \sigma^2,\bm W) &=  \sigma^2 \rho (1- \rho) \;.
\ee
Define the quantities,
\be
\tilde{m}_j  &=
%{1 \over 2^{k-1}} \Big[\big(\sum_{\bm x \in V_1(\theta)} \tau(\bm x)  -  \sum_{\bm x \in V_0(\theta)} \tau(\bm x)  \big)  -
%{(1-\rho) \over 2^{k}} \big(\sum_{\bm x \in V_1(\theta)} (\tau(\bm x) - \yob(\bm x))  -  \sum_{\bm x \in V_0(\theta)} (\tau(\bm x) - \yob(\bm x)) \big) \Big] \;.
{1 \over 2^{K-1}}\Big[\big(1 -{1-\rho \over 2^K} \big) {\bm g}_j' \bm \mu +  {1-\rho \over 2^K}  {\bm g}_j' \yob  \Big] \;, \\
v_j &= {\sigma^2 \over 2^{2(K-1)}} k(\rho) \;.
\ee
Using \eqref{eqn:fullcond} and the definition of $\tau_{ij}$, for every $j$ we see that  $\tau_{ij}$ is a Gaussian random variable
with variance  $v_j$ for all $i$. Furthermore, conditional on $\bm \mu, \sigma^2$ and $\yob$, $\tfs = {1 \over N} \sum_{i=1}^N \tau_{ij}$ has mean $\tilde m_j$.
%\be \label{eqn:thetaidist}
%\small[\theta | \yo_i, \tau, \sigma^2, \bm W \small] = \mathrm{N}\big(\tau_\theta, v_\theta \big) \;.
%\ee
Now we use the crucial fact that, $\tau_{ij}$ and $\tau_{i'j}$ are \emph{independent},
conditional on $\bm \mu, \sigma^2, \yo$ for all $i \neq i'$.
Thus it immediately follows that
\be \label{eqn:thetafin}
 \small[\tfs | \yov, \bm \mu, \sigma^2, \bm W \small] = \mathrm{N}\big(\tilde m_j, {1 \over N} v_j \big) \;.
\ee
Marginalizing over the posterior distribution of $\bm \mu$ and $\sigma^2$, we see that the posterior of a factorial effect $\tfs$ follows a $t$-distribution.  Now the first claim follows from the fact that
$\mathbb{E}(\tfs| \yo, \bm \mu, \sigma^2, \bm W) = \tilde m_j$ (see \eqref{eqn:posttau}) does not depend on $\sigma^2$. Thus by the law of iterated expectations,
\be
\mathbb{E}(\tfs| \yov , \bm W) = \mathbb{E}(\tilde m_j| \yo , \bm W) = m_j.
\ee
Now we turn to the variance of $\tfs$. Since
 $\var( \mathbb{E}(\tfs| \yo,\sigma^2, \bm W)) = 0$,
 it follows that
 $ \var(\tfs| \yov , \bm W) = \mathbb{E}( \var(\tfs| \yov , \sigma^2, \bm W))$.
 Again, using the variance formula for iterated expectations,
\be
\var(\tfs| \yov , \sigma^2, \bm W) &= \mathbb{E}( \var(\tfs| \yov , \bm \mu, \sigma^2, \bm W)) + \var(\mathbb{E}(\tfs| \yov , \bm \mu, \sigma^2, \bm W)) \\
&={1 \over N} {\sigma^2 \over 2^{2(K-1)}} k(\rho)  + {\sigma^2 \over  2^{2(K-1)}}   {2^{K} \over r+r_0} \big(1 -{1-\rho \over 2^K} \big)^2 \label{eqn:sigmapost}
\ee
where for the last term we have used \eqref{eqn:posttau}.
Now the claim follows from \eqref{eqn:sigmapost} and the fact that $\mathbb{E}(\sigma^2| \yov , \bm W])= V$, finishing the proof.
\end{proof}

\begin{proof}[Proof of Corollary \ref{cor:varthetahat_non1}]
 From Theorem \ref{thm:tmeanvar}, $\lim_{r_0 \rightarrow 0} \mathbb{E}(\tfs|\yob, {\bm W})
= \lim_{r_0 \rightarrow 0} m_j $.
\be
 \lim_{r_0 \rightarrow 0} m_j  &= \lim_{r_0 \rightarrow 0} \big(1 -{1-\rho \over 2^K} \big) {1 \over 2^{K-1}}\bm g_j' \bm m + {1 - \rho \over 2^K}\,  \hattfs \\
 &= \big(1 -{1-\rho \over 2^K} \big) \hattfs  + {1 - \rho \over 2^K}\,  \hattfs
 = \hattfs
\ee
where in the penultimate step, we used the fact that $\lim_{r_0 \rightarrow 0} \bm m(\bm z) = \yob(\bm z)$, proving the first claim. For the variance, using Theorem \ref{thm:tmeanvar} and straightforward algebra using the fact
 $\lim_{\alpha \rightarrow 1 , \beta \rightarrow 0}\lim_{r_0 \rightarrow 0} V =  V_\ell$
 yields the result.
\end{proof}

\normalsize

\section*{References}

\small

\begin{description}

\item Box, G. E. P., and Bisgaard, (1987), ``The Scientific Context of Quality Improvement,'' \emph{Quality progress},20, 54-61.

\item Box, G. E. P., and Hunter, J. S.(1961), ``The $2^{k-p}$ fractional factorial designs (Parts 1 and 2),'' \emph{Technometrics}, 3, 311-351, 449-458.

\item Box, G. E. P., Hunter, J. S., and Hunter, G. W. (2005), \emph{Statistics for Experimenters}, New Jersey: John Wiley \& Sons.

\item Cox, D. R. (1958), \emph{Planning of Experiments}, London: Chapman \& Hall.

\item Collins, L., Dziak J., Li, R., (2009) ``Design of Experiments with Multiple Independent Variables: A Resource Management Perspective on Complete and Reduced Factorial Designs,'' \emph{Psychological Methods}, 14.

\item Chakraborty, B., Collins, L., Strecher, V., and Murphy, S. (2009), ``Developing multicomponent interventions using fractional factorial designs,'' \emph{Statistics in Medicine}, 28(21): 2687 - 2708.

\item Fisher, R. A. (1930),``Inverse Probability,'' \emph{Proceedings of the Cambridge Philosophical Society},26,528-535.

\item Fisher, R. A. (1942), \emph{The Design of Experiments}, 3rd ed. New York: Hafner-Publishing.

\item Imbens, G. W. and Rubin, D. B. (2012), \emph{Causal Inference in Statistics and Social Sciences}, in press.

\item Kempthorne, O. (1955), ``The Randomization Theory of Experimental Inference,'' \emph{Journal of The American Statistical Association}, 50, 946–967.

\item Loughin, T.M. and Noble, W. (1997), ``A Permutation Test for Effects in an Unreplicated Factorial Design,'' \emph{Technometrics}, 39, 469-473.

\item McKay, M.D.; Beckman, R.J.; Conover, W.J. (1979), ``A Comparison of Three Methods for Selecting Values of Input Variables in the Analysis of Output from a Computer Code,'' \emph{Technometrics} 21, 239–245.

\item Montgomery, D. C., (2000), \emph{Design and Analysis of Experiments}, New York: John Wiley \& Sons.

\item Morgan, K.L. and Rubin, D. B. (2012), ``Rerandomization to Improve Covariate Balance in Experiments,'' \emph{Annals of Statistics}, 40(2): 1262-1282.

\item Mukerjee, R. and Wu, C. F. J. (2006), \emph{A Modern Theory of Factorial Designs}, New York: Springer.

\item Neyman, J. (1923/1990) ``On the Application of Probability Theory to Agricultural Experiments. Essay on Principles. Section 9,'' \emph{Statistical Science}, 5 (4), 465-472, translated by Dorota M. Dabrowska and Terence P. Speed.

\item Rosenbaum, P., and Rubin, D. B. (1983), ``The Central Role of the Propensity Score in Observational Studies for Causal Effects,'' \emph{Biometrika}, 70, 1, 41-55.

\item Rubin, D. B. (1974), ``Estimating Causal Effects of Treatments in Randomized and Nonrandomized Studies,'' \emph{Journal of Educational Psychology} 66 (5), 688 - 701.

\item Rubin, D. B. (1975), ``Bayesian Inference for Causality: The Importance of Randomization,'' \emph{Proceedings of the Social Statistics Section of the American Statistical Association}, 233-239.

\item Rubin, D. B. (1977), ``Assignment to Treatment Group on the Basis of a Covariate,'' \emph{Journal of Educational Statistics}, 2, 1-26.

\item Rubin, D. B.(1978), ``Bayesian inference for causal effects: The Role of Randomization,'' \emph{The Annals of Statistics}, 6, 34-58.

\item Rubin, D. B. (1980), ``Randomization Analysis of Experimental Data: The Fisher Randomization Test: Comment'', \emph{Journal of the American Statistical Association}, 75, 591-593.

\item Rubin, D. B. (1984), ``Bayesianly Justifiable and Relevant Frequency Calculations for the Applied Statistician,'' \emph{The Annals of Statistics}, 12, 1151-1172.

\item Rubin, D. B. (1990), ``Neyman (1923) and Causal Inference in Experiments and Observational Studies,'' \emph{Statistical Science}, 5, 472-480.

\item Rubin, D. B. (2008), \emph{Statistical Inference for Causal Effects, With Emphasis on Applications in Epidemology and Medical Statistics}, Handbook of Statistics, Volume 27, Elsevier, Oxford, UK.

\item Rubin, D. B. (2010), ``Reflections Stimulated by the Comments of Shadish (2010) and West and Thoemmes (2010),'' \emph{Psychological Methods}, 15, 38-40.

\item Santner, T. J., Williams, B. J. and Notz, W. I. (2003), \emph{The Design and Analysis of Computer Experiments}, New York: Springer.

\item Wang, Y. H. (2000), ``Fiducial Intervals: What Are They?'', \emph{The American Statistician}, 54, 105-111.

\item Wilk, M. B. (1955), ``The Randomization Analysis of a Generalized Randomized Block Design,'' \emph{Biometrika}, 42, 70-79.

\item Wu, C. F. J., and Hamada, M. S. (2009), \emph{Experiments: Planning, Analysis, and Optimization}, New York:
Wiley.

\item Yates, F. (1937), ``The Design and Analysis of Factorial Experiments'', \emph{Imperial Bureau of Soil Sciences - Technical Communication No. 35}, Harpenden.

\end{description}

\end{document}